\RequirePackage{amsmath}
\documentclass{llncs}
\usepackage[english]{babel}
\usepackage{amsmath}
\usepackage{graphicx}
\usepackage{tikz}
\usetikzlibrary{shadows}
\usepackage{hyperref}
\usepackage{pgffor}
\usepackage{pdflscape}
\usepackage{amsmath}
\usepackage[T1]{fontenc}
\usepackage[outline]{contour}
\usepackage[english]{babel}
\usepackage{amsmath}
\usepackage{amsfonts}
\usepackage{titletoc}
\usepackage{tikz}
\usepackage{mathtools}
\usepackage{fp}
\usepackage{todonotes}
\usepackage{pseudo}
\usepackage{mdframed}

\usepackage{bold-extra}
\usepackage{algorithm}
\usepackage[noend]{algorithmic}
\usepackage{amssymb}
\usepackage{siunitx}
\usepackage{bm}
\usepackage{array}
\usepackage{booktabs}
\usepackage{xcolor}
\usepackage{orcidlink}
\pagestyle{plain}

\usepackage{thmtools} 
\usepackage{thm-restate}

\usetikzlibrary{calc}
\usetikzlibrary{decorations.pathreplacing}

\title{On the Hardness of the Drone Delivery Problem}
\author{Simon Bartlmae, Andreas Hene, Kelin Luo}
\date{Feb 2025}
\authorrunning{S. Bartlmae et al.} 
\institute{Institute of Computer Science, University of Bonn, Bonn, Germany \\
\email{\{bartlmae, hene\}@uni-bonn.de}  \and
Department of Computer Science and Engineering, University at Buffalo, NY, USA\\
\email{kelinluo@buffalo.edu}}

\begin{document}

\maketitle

\begin{abstract} Fast shipping and efficient routing are key problems of modern logistics. Building on previous studies that address package delivery from a source node to a destination within a graph using multiple agents (such as vehicles, drones, and ships), we investigate the complexity of this problem in specialized graphs and with restricted agent types, both with and without predefined initial positions.    
Particularly, in this paper, we aim to minimize the delivery time for delivering a package. To achieve this, we utilize a set of collaborative agents, each capable of traversing a specific subset of the graph and operating at varying speeds. This challenge is encapsulated in the recently introduced Drone Delivery Problem with respect to delivery time (DDT). 
    
In this work, we show that the DDT  with predefined initial positions on a line is NP-hard, even when considering only agents with two distinct speeds. This refines the results presented by Erlebach, et al.~\cite{erlebach:drones}, who demonstrated the NP-hardness of DDT on a line with agents of arbitrary speeds. Additionally, we examine DDT in grid graphs without predefined initial positions, where each drone can freely choose its starting position. We show that the problem is NP-hard to approximate within a factor of $O(n^{1-\varepsilon}$), where $n$ is the size of the grid, even when all agents are restricted to two different speeds as well as rectangular movement areas. We conclude by providing an easy $O(n)$ approximation algorithm. 
  % agents travel at the same speed and their movement areas are arbitrary. %Furthermore, even when agents are restricted to rectangular movement areas, we  establish NP-hardness for scenarios allowing at most two different speed. \kelin{inapprox results, or remove this one?} 
 % \keywords{Delivery \and Complexity \and  Combinatorial Optimization.}
\end{abstract}
%\todo{S: maybe we could make it sound less enthusiastic if it about companies. Mentioning them is fine, but maybe we can cut "great success" or "standing out"} 

\section{Introduction}
With the rapid growth in e-commerce it becomes increasingly more important for logistics companies to deliver packages to customers as fast as possible. 
The Prime Air project, launched by 
%Standing out is the Prime Air project launched by 
Amazon~\cite{aboutamazonDroneProgram}, introduces a new approach to parcel delivery. The German logistics company DHL 
has tested the delivery of medical supplies, such as blood samples, using flying mobile agents. 
%tested the fast delivery of medicine supplies like blood samples via flying mobile agents – to a great success.  
When it comes to areas which are not connected to a well-formed infrastructure or areas threatened by war, delivery of crucial goods as medicine and food can help people in need~\cite{bamburry2015drones,hii2019evaluation}. 

The collaborative delivery aspect of the drone delivery problem, which involves multiple agents on a single delivery path, primarily arises from variations in agent speed~\cite{bartschi2018}, consumption rates~\cite{bartschi_et_al:LIPIcs.STACS.2017.10}, or constraints such as battery power~\cite{bartschi2020collaborative} and movement areas~\cite{erlebach:drones}.  
In this paper, we further refine the model introduced by Erlebach et al.~\cite{erlebach:drones}. When it comes to their model, each agent's movement is confined to its designated area. Specifically, agents are allowed unrestricted movement within their respective subgraphs. This setting is motivated by realistic drone delivery scenarios where airspace is regulated by licenses, restricting certain drones to specific regions. Additionally, it is a natural observation that different types of agents, such as ships, drones, and trucks, are capable of traversing distinct parts of the graph. In addition, each drone operates at designated speeds and has its own consumption rate. Although both objectives – minimizing delivery time and reducing consumption – have been extensively studied, our research will exclusively focus on minimizing delivery time. We define this particular challenge as the Drone Delivery Problem with respect to time (DDT).   

In this work we will focus on paths and grid graphs. We refer to the fastest delivery on a line as DDT-Line (Drone Delivery on a Line) and the fastest delivery on a grid as DDT-Grid. Our work is primarily of theoretical interest and has the potential to lead to more generalized applications. Considering paths is theoretically significant, as they are the simplest form of graphs, and practically relevant, especially for fast last-mile delivery of critical goods, such as medical supplies. The grid graph setting is particularly applicable to densely populated areas such as large towns. 

%Throughout this work, we will focus solely on minimizing the delivery time. We call this problem the Drone Delivery problem with respect to time (DDT).

\subsubsection*{Related Work.}
 The delivery problem has been extensively studied through various models, including the Shortest Path Problem~\cite{ortega2022shortest}, the Traveling Salesperson Problem (TSP)~\cite{gutin2006traveling}, the Vehicle Routing Problem~\cite{toth2002vehicle},
 %Steiner Tree Problem (also Steiner Forest Problem)~\cite{byrka2010improved,chlebik2008steiner,kumar2021constant}, 
 and Dial-a-Ride Problem~\cite{cordeau2007dial}.
All these models assume that each request, whether a single node or a pair of nodes, is served by a single vehicle. 
 %All these models assume a single server is available to manage one or multiple source-destination pairs. \todo{S: At least vehicle routing has multiple agents, so maybe we just cut the second sentence?}
 %\todo{S: I think TSP, vehicle routing and Pickup and Delivery Problem (PDP) are all more similar than steiner tree, so maybe we could replace it here}
 
 % Initially, the problem of sending a package from a single source to a destination by the shortest path—known as the Shortest Path Problem—was solved in polynomial time, and algorithms have been developed to achieve faster run times~\cite{ortega2022shortest}. Subsequently, a more complex scenario involving multiple source-destination pairs were explored within the settings of the Steiner Tree Problem and the Steiner Forest Problem~\cite{byrka2010improved,chlebik2008steiner,kumar2021constant}, which is known to be NP-hard. Further complexity arises with the Dial-a-Ride Problem, in which the server has a capacity to carry multiple packages simultaneously, with the objective of minimizing the total travel distance while delivering all packages~\cite{cordeau2007dial}. All these models assume the availability of a single server to manage a set of source-destination pairs. 
Our study focuses on the collaborative delivery problem, involving multiple agents (servers, drones, etc.) in a single delivery process. This topic is inspired by the need for efficiency, aiming for faster or energy-efficient deliveries,   and accommodating constraints like battery limitations and movement restrictions. The primary motivation is to optimize delivery paths, reducing consumption. 
% Our study is more closely aligned with the collaborative delivery problem, in which multiple agents (servers, drones, etc.) are involved in even one delivery process. 
% The collaboration is inspired by the need for more efficient delivery agents, focusing on faster or energy-efficient delivery, and accommodating constraints such as battery limitations or movement restrictions. Most of the collaboration is motivated by the goal of achieving a faster delivery path, reducing consumption, or meeting lexicographical objectives.  
Bärtschi et al.~\cite{bartschi2018} first explored the delivery of a package from a source node to a target node with the aim of minimizing delivery time. They demonstrated that the problem can be solved in 
 time $O(k^2m+kn^2+\text{APSP})$ for a single package, where $n$ is the number of vertices, $m$ is the number of edges, $k$ is the number of agents and APSP represents the time required to compute all-pairs shortest paths in a graph with $n$ nodes and $m$ edges. 
 Carvalho et al. \cite{carvalho2021fast} improved the time complexity by developing an algorithm that runs in $O(kn \log n + km)$ time. Interestingly, they demonstrated that the problem escalates to NP-hardness with the addition of a second package.  To minimize energy consumption in package delivery, Bärtschi et al.~\cite{bartschi2018} developed a polynomial-time algorithm for delivering a single package, but demonstrated NP-hardness for multiple packages. They also investigated optimizing delivery time and energy for a single package. Lexicographically minimizing (time, energy) is polynomially feasible~\cite{bartschi2017energy}, but minimizing any combination of time and energy proves to be NP-hard~\cite{bartschi2018}.

Considering the realistic scenario where each agent has limited working time or energy, which restricts its total distance traveled, exploring a collaborative path utilizing multiple agents to deliver packages becomes particularly valuable~\cite{chalopin2014data,chalopin2014dataicalp,bartschi2020collaborative}.  Chalopin et al.~\cite{chalopin2014data} first demonstrated that the energy-constrained drone delivery problem is NP-hard in general graphs. Subsequently, Chalopin et al.~\cite{chalopin2014dataicalp} showed that this variant remains NP-hard even on a path graph. Bärtschi et al.~\cite{bartschi2020collaborative} found that a variant requiring each agent to return to its initial location is solvable in polynomial time for tree networks. 

Our study, as well as the research in~\cite{erlebach:drones}, diverges fundamentally from studies constrained by energy budgets. In our case, each agent's travel distance is not confined by a strict budget but is restricted to a specific subgraph where it can travel unlimited distances. As a result, the hardness results and algorithmic findings from studies with energy constraints do not directly translate to our problem. 
Erlebach et al.~\cite{erlebach:drones} first introduced this movement-restricted model. They demonstrate that for collaborative agents with restricted movement areas, it is NP-hard to approximate the Drone Delivery Time (DDT) within $O(n^{1-\varepsilon})$ or $O(k^{1-\varepsilon})$ 
 on general graphs, even if agents have equal speeds. 
%\todo{S: it is not really a 'special' case i think} 
For the case without initial positioning, they show that no polynomial time approximation with a finite ratio exists unless 
P$=$NP. Additionally, they argue that solving instances on a path remains NP-hard, when there are an arbitrary amount of different speeds. %\todo{S: maybe we could use 'an arbitrary amount of different speeds' instead}

%\todo{Kelin will add more literature on Saturday}
%Our underlying model was introduced by Erlebach et al.~\cite{erlebach:drones} in 2022. They show that for collaborative agents with restricted movement areas, it is NP-hard to approximate DDT within $O(n^{1-\varepsilon})$ or $O(k^{1-\varepsilon})$ on general graphs, even if agents have equal speeds. For the special case of no initial positioning, they show that no polynomial time approximation with finite ratio exists, unless P$=$NP. Additionally they argue that it is also NP-hard to solve instances on a path. For the setting without movement restrictions Carvalho et. al. \cite{carvalho2021fast} give an algorithm that runs in $O(kn \log n + km)$ time, where $n$ is the number of vertices, $m$ is the number of edges and $k$ is the number of agents. Interestingly, they show that the problem becomes NP-hard if a second package is added.

\subsubsection*{Our results.}

In section \ref{npline}, we provide stronger hardness results (see Theorem~\ref{thm:line}) than those presented by Erlebach et al.~\cite{erlebach:drones}. %That is, the problem remains NP-hard even if we restrict agents to two different. 
\begin{restatable}{theorem}
{thmline}\label{thm:line} 
The Drone Delivery Problem with initial positions on a line (DDT-Line) is NP-hard, even if all agents have only two different speeds.  
\end{restatable}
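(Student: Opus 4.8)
The plan is to prove Theorem~\ref{thm:line} by a polynomial-time reduction from an NP-complete number problem, most naturally \textsc{Partition} (equivalently \textsc{Subset-Sum}). Given integers $a_1,\dots,a_n$ with $\sum_i a_i = 2B$, I would build a DDT-Line instance on a path whose source $s$ sits at the left end and destination $t$ at the right end, using only two speeds: a slow speed $1$ and a fast speed $v>1$. Slow agents are placed densely enough that the package can always advance at speed $1$, giving a baseline delivery time, while fast agents provide time-saving shortcuts. The line is partitioned into $n$ consecutive \emph{item gadgets}, where gadget $i$ has length scaled to $a_i$, so that carrying the package through gadget $i$ at the fast speed rather than the slow speed saves exactly $a_i\bigl(1-\tfrac1v\bigr)$ units of time. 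Consequently, if the package is carried fast precisely on the gadgets in a set $X\subseteq\{1,\dots,n\}$, the total delivery time is $\mathrm{base}-\bigl(1-\tfrac1v\bigr)\sum_{i\in X} a_i$, so meeting a deadline $T^{*}$ is equivalent to $\sum_{i\in X} a_i \ge B$.

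The crux is to force the \emph{reverse} inequality $\sum_{i\in X} a_i \le B$ through the synchronization constraints, so that the deadline is met \emph{iff} some subset sums to exactly $B$. I would achieve this by choosing the initial positions and the (narrow) movement ranges of the fast agents so that each fast shortcut is available to the package only inside a tight time window: a fast agent reaches the entry of its gadget at a prescribed time, and its limited range forces a handoff back to slow agents by a prescribed later time. The windows are tuned so that saving ``too much'' time early pushes the package ahead of the windows of later fast agents, rendering those shortcuts unusable; conversely, a designated long \emph{highway} segment near $t$ must be traversed fast (by a dedicated fast agent launched from far away) to meet the deadline at all, and that agent can be met only if the package has \emph{not} saved more than a prescribed amount beforehand. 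Together these make $\sum_{i\in X} a_i > B$ infeasible, while the deadline makes $\sum_{i\in X} a_i < B$ too slow, pinning the value to exactly $B$. To make the argument rigorous I would first prove a \emph{normal-form lemma}: some optimal delivery moves the package monotonically from $s$ to $t$, performs handoffs only at gadget boundaries, and never waits except immediately before boarding a fast agent. This collapses the continuum of schedules to the finitely many fast/slow choices $X$, after which the two-force analysis applies.

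I expect the main difficulty to be exactly the enforcement of the upper bound $\sum_{i\in X} a_i \le B$. The danger is that the ability to \emph{wait} trivializes the problem: with purely one-sided release constraints (a fast agent that simply waits for the package), arriving too early is never harmful, since the package can idle until the agent is ready, and the resulting fixed-order two-mode makespan problem is solvable greedily in polynomial time. The reduction must therefore realize genuinely \emph{two-sided} timing windows — an earliest \emph{and} a latest admissible boarding time for each shortcut — using only agent positions, ranges, and the two speeds; it is this two-sidedness that defeats the greedy/DP solution and injects the \textsc{Partition} structure. A more routine layer of work is the bookkeeping: scaling the $a_i$ to rational coordinates of polynomial size, checking that exactly two distinct speeds occur, and proving both directions of correctness (a subset of sum $B$ yields a schedule meeting $T^{*}$, and any schedule meeting $T^{*}$ induces such a subset).

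An alternative route, closer in spirit to the refinement being claimed, would be to start from the arbitrary-speed line reduction of Erlebach et al.~\cite{erlebach:drones} and replace each agent carrying a distinct ``encoding'' speed by a small gadget built from one slow and one fast agent over a rescaled length that reproduces the same traversal time. Here the obstacle shifts to showing that this speed-simulation preserves the original reduction's correctness while introducing no third speed and only a polynomial blow-up; either way, the essential combinatorial content is the two-sided timing that encodes the subset choice.
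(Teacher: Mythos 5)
Your overall architecture matches the paper's: a reduction from \textsc{Partition}, item gadgets whose lengths scale with the $a_i$, a fixed threshold time achieved exactly by a perfect partition, and a normal-form observation collapsing schedules to discrete fast/slow choices. You also correctly diagnose the central obstacle: with only release-time (``earliest boarding'') constraints the problem is greedily solvable, because both the package and any agent may wait indefinitely, so arriving early is never directly harmful. But your proposed fix --- genuinely two-sided boarding windows, where saving too much time early ``pushes the package ahead of the windows of later fast agents, rendering those shortcuts unusable,'' and a far-away highway agent that ``can be met only if the package has not saved more than a prescribed amount'' --- is not realizable in this model, for precisely the reason you yourself flagged. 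Initial positions fix only where agents start; their subsequent motion is part of the solution, so a fast agent can simply sit at its pickup point indefinitely, and the package can idle until any shortcut becomes available. No choice of positions, ranges, and two speeds yields a \emph{latest} admissible boarding time; early arrival costs only waiting. As sketched, your deadline condition therefore collapses to $\sum_{i\in X} a_i \ge B$, which is trivially met by taking every shortcut, and the reduction proves nothing.

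The paper closes this gap with an idea absent from your proposal: the savings are not free-standing shortcuts but are produced by a shared, \emph{limited budget} of slow element agents $e_i$, each of which must choose to help either on the left or on the right half of the line, the two halves being separated by a small gap that (besides the $e_i$) only a slow rendezvous agent $p$ can bridge --- and $p$ walks toward the gap from the far side at speed $1$, so it arrives there at a fixed time. Over-helping the left makes the package wait for $p$ (nullifying the excess savings) while leaving too few helpers for the right; under-helping the left loses time the right side cannot recoup, because the left base intervals are inflated by a factor $1+\frac{1}{P}$, making left help strictly more valuable. This asymmetric split of a finite helper budget around a time-gated rendezvous is what pins the left-side sum to exactly $\frac{P}{2}$ (supported by lemmas showing no optimal schedule skips the fast agents $f_i$ or the agent $p$); your construction, in which each gadget's fast traversal is independently available, has no mechanism to charge the schedule for taking too many shortcuts. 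Your alternative route --- simulating each distinct speed of the Erlebach et al.\ reduction by slow/fast agent pairs over rescaled lengths --- is likewise not what the paper does, and it would face the same difficulty: the simulation must preserve timing constraints at arbitrary handoff points, which again requires some device beyond one-sided waiting.
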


In Section \ref{npgrid}, we present a strong result for (unit) grid graphs:  
%For grid graphs we show various results in section \ref{npgrid}. 
%First we show in \ref{2grid} that DDT with agents that are restricted to rectangular movement areas is NP-hard. Secondly we argue that for agents having arbitrary movement areas but equal speed \todo{maybe unit speed sounds better} DDT remains NP-hard. Both of these results hold regardless of the agents having initial positions or if positions can be chosen. With slight extensions we can show that for the case of no initial positions, the problem is not even approximable.

\begin{restatable}{theorem}
{thmgrid}\label{thm:grid_2speed}  
  For any constant $\varepsilon > 0$, the Drone Delivery Problem on a grid with agents following rectangular movement areas (DDT-GridR) without initial positions is NP-hard to approximate within a factor of $O(n^{1-\varepsilon})$, where $n$ denotes the size of the grid, even if all agents have only two different speeds.  
\end{restatable}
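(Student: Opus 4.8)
The plan is to give a gap-introducing reduction from 3SAT, or from its rectilinear/planar variant, which embeds naturally into a grid, mapping a formula $\varphi$ to a DDT-GridR instance in which a delivery from the source $s$ to the target $t$ that is relayed \emph{entirely by fast agents} is realizable if and only if $\varphi$ is satisfiable. To make the separation a genuine time gap rather than a feasible-versus-infeasible distinction, I would include one slow agent whose rectangular movement area is the whole grid, so that delivery is always possible; every other agent is fast and confined to a small rectangle. Writing $V$ for the fast speed and $v \ll V$ for the slow speed, the target behaviour is: if $\varphi$ is satisfiable the package can be carried along a route of length $L$ using only fast agents, giving delivery time at most $L/V$; if $\varphi$ is unsatisfiable then every delivery is forced to hand the package to the global slow agent over a stretch of length $\ell$, giving time at least $\ell/v$. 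Choosing $v$, $\ell$, and the grid dimensions so that $(\ell V)/(L v) \ge n^{1-\varepsilon}$ then yields the claimed inapproximability.

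For the construction I would build the usual variable/clause skeleton out of thin fast rectangles acting as fast lanes. A variable gadget offers two parallel fast lanes (the positive and negative literal), laid out with the small fast rectangles so that only one of the two lanes can be completed into a continuous fast relay; committing to a lane encodes the truth value. Each clause gadget is a junction the fast route can traverse quickly only when it arrives along a lane belonging to a satisfying literal, while the sole alternative through the junction is the slow background agent. Since movement areas must be axis-aligned rectangles and only two speeds are available, the gadgets must be placed so that the realizable fast connections are \emph{exactly} the intended ones: the variable lanes, the wires carrying the chosen assignment to the clauses, and their crossings have to be engineered so that no unintended overlap of fast rectangles creates a shortcut. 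Completeness should follow by reading off a satisfying assignment and tracing the corresponding fast route; the delicate direction is soundness, i.e.\ showing that when $\varphi$ is unsatisfiable some clause junction is impassable by fast agents and the slow stretch of length $\ell$ is unavoidable.

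I expect the soundness argument to be the main obstacle. With only rectangular areas the designer has very limited expressive power, so the crux is certifying that in the unsatisfiable case there is genuinely no continuous chain of intersecting fast rectangles from $s$ to $t$ that bypasses the slow detour, and in particular that routing the assignment wires and their crossovers in the planar grid never accidentally bridges two lanes with fast coverage; a single stray rectangle overlap would collapse the gap. The secondary, more quantitative task is the gap itself: I would keep the fast speed constant, pick the slow speed $v$ and the forced slow length $\ell$ as suitable polynomials in $|\varphi|$, and pad the instance (for example by elongating the forced slow detour) so that the resulting grid has $n$ vertices with $\ell V/(L v) \ge n^{1-\varepsilon}$ for every constant $\varepsilon>0$, while the number of distinct speeds stays two.

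An alternative to designing gadgets from scratch would be to take the general-graph, without-initial-positions hardness of Erlebach et al.~\cite{erlebach:drones} and realize the underlying graph as a grid equipped with rectangular areas, transferring their gap; but since faithfully embedding an arbitrary graph under the rectangle restriction is itself a constrained problem, and one must still suppress unintended rectangle overlaps, I would expect the direct gadget reduction above to be the cleaner route, with the crossover-free soundness analysis as the step demanding the most care.
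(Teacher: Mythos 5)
Your high-level plan has a genuine flaw that no amount of gadget care can fix. In DDT \emph{without} initial positions, whether the package can be relayed from $s$ to $y$ entirely by fast agents is a \emph{static} property of the instance: every fast agent may start wherever it is needed, so an all-fast relay exists exactly when $s$ and $y$ are connected in the union of the fast rectangles, i.e.\ when there is a chain of pairwise-overlapping fast areas. This is checkable in polynomial time on the instance your reduction outputs, so a reduction in which ``a fast-only route exists iff $\varphi$ is satisfiable'' would put \textsc{3SAT} in P. Concretely, this is why your two key gadgets have no enforcement mechanism: a variable gadget in which ``only one of the two lanes can be completed into a continuous fast relay'' is impossible (either a lane is a connected chain of fast rectangles or it is not, independent of any commitment), and a clause junction that is ``traversable quickly only when it arrives along a satisfying lane'' cannot exist either, because the set of usable agents does not depend on the history of the route. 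The choice structure of SAT simply cannot be encoded in fast-rectangle connectivity alone.

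The paper's construction locates the choice elsewhere: in the \emph{placement of slow agents}, which is exactly the degree of freedom the no-initial-positions setting provides. Fast agents never connect $s$ to $y$ by themselves; both the variable gadgets and the clause columns contain unit-length gaps that only slow \emph{literal} agents (long thin horizontal strips, reduction from \textsc{2P1N-3SAT}) can bridge. Each literal agent's strip reaches two possible helping spots -- a gap in its variable gadget or the gap at its clause column -- separated by horizontal distance $a$, and $a$ is chosen (roughly $n'^{6/\varepsilon}$) so that any slow agent carrying the package over, or even appearing at two points separated by, distance $a-1$ already forces delivery time above $n^{1-\varepsilon}$ times the threshold. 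Hence each slow agent effectively serves at most one gap per schedule, and the question becomes whether the slow agents can be placed to cover \emph{all} gaps simultaneously -- an assignment problem that genuinely encodes satisfiability. Note also that no global slow agent is needed for feasibility: the literal agents' long strips already make delivery always possible, just prohibitively slow when some gap is left uncovered. To salvage your approach you would need some analogue of this ``one agent, several distant candidate helping spots, at most one usable within the time budget'' mechanism; a purely fast-lane encoding with a slow fallback cannot create the gap.
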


%Both results are valid whether agents have predefined initial positions or if positions can be chosen freely. 
%With additional analysis in \ref{XXX} \todo{include section, we could do this in the respective sections or do both in an extra section 4.3 or so}, we also show that in scenarios without initial positions, the problem is not approximable. 
%\todo{what do you mean here? A setting that can do both? kelin: removed the sentence, may add more discussion about the two setting's proof connection}
%This particular setting has not yet been studied. However, it has practical applications, such as repositioning drones before delivery.   

% \begin{restatable}{theorem}
% {thmgridspeed}\label{thm:grid_1speed} 
%    The Drone Delivery Problem on grid graphs with unit speed agents (DDTGU), whether with or without initial positions, remains NP-hard if agents have arbitrary movement areas. 
% \end{restatable}
\section{Preliminaries}

We now formally define the Drone Delivery Problem with respect to time (DDT), following the notation used in Erlebach et al.~\cite{erlebach:drones}. To ensure completeness, we include a formal definition here.

 An instance of DDT is a tuple $I = (G, (s, y), A)$, where $G = (V, E, \ell)$ is an undirected graph, $\ell: E \rightarrow \mathbb{R}_{\geq 0}$ represents the edge lengths. The package  $(s, y)$ is defined by its starting location $s\in V$ and destination  $y\in V$, and $A$ is a set of $k$ drones.   
    \begin{enumerate}
    \item In DDT \textbf{with} predefined
initial positions, each drone $a \in A$ is represented as a tuple $a=(p_a,v_a,G_a)$, where $p_a$ is the initial starting position, $v_a$ is the agents speed, and $G_a=(V_a, E_a)$ is a connected subgraph of $G$ that defines the agent's movement area.  We assume that $p_a\in V_a$.  

Let $L_a(u, w)$ represent the length of the shortest path from $u$ to $w$ in agent $a$'s subgraph $G_a$. The time it takes for an agent $a$ to travel from $u$ to $w$ is given by  $\frac{L_a(u, w)}{v_a}$. 

    \item In DDT \textbf{without} predefined
initial positions, each drone $a \in A$ is represented as a tuple $a=(v_a,G_a)$, where  $v_a$ is the agents speed, and $G_a=(V_a, E_a)$ is a connected subgraph of $G$ that defines the agent's movement area. In contrast to DDT with predefined
initial positions, the initial position of each drone $a$ can be freely chosen within $V_a$.%\todo{A: within $V_a$} 
    \end{enumerate}

%We consider a connected graph $G=(V,E)$ along with edge lengths $\ell:E\rightarrow \mathbb{R}_{\geq 0}$, where $\ell(u,v)$ represents the length of the edge between $u$ and $v$. Additionally we are given a set $A$ consisting of $k\geq 1$ mobile agents. Each agent $a\in A$ is modeled as $a=(p_a,v_a,V_a,E_a)$, where $p_a$ is the initial starting position, $v_a$ is the agents speed, and $V_a$ and $E_a$ define the agent's respective movement area. Note that $(V_a, E_a)$ is a   subgraph of $G$ that and we assume that it is connected.   We assume that $p_a\in V_a$. The time it takes for an agent $a$ to travel a distance $x$ is given by  $\frac{x}{v_a}$. 

A feasible solution to the DDT is a schedule of consecutive trips by agents carrying the package from $s$ to $y$. Each agent's trip consists of two phases: the \emph{empty phase}, during which the agent moves to its pickup location without the package, and the \emph{delivery phase}, during which the agent transports the package.  
In this paper, we assume that packages can be handed over between agents at vertices only. %\todo{A: I suggest ... between agents at vertices only}. %It is worth noting that the hardness results can be extended to the scenario where packages are allowed to be handed over at edges. 
The objective in solving the DDT is to find a schedule that minimizes the total delivery time. In the section on hardness results, we further define the corresponding graph structures as either a path or a grid graph, and use DDT-Line and DDT-Grid to represent the respective problems. 
%Our objective in solving the DDT is to find a schedule that minimizes the total delivery time. In the section on hardness results, we further define the corresponding graph structures as either a path or a grid graph, and use DDT-Line and DDT-Grid to represent the respective problems. 

According to an observation in \cite{erlebach:drones} for any instance of DDT, there exists an optimal solution in which each involved agent picks up and drops off the package exactly once. Therefore, without loss of generality, we restrict our focus to instances where each drone participates in the delivery route at most once.  
We can now define a feasible solution - i.e. a schedule - by describing each agents' movement and the path the package takes. 
%According to the observation in paper~\cite{erlebach:drones}: for any instance of DDT,  there exists an optimal solution in which each involved agent picks up and drops off the package exactly once. We could further refine the feasible solution $S$ (also called a schedule), where each drone is involved at most once: \todo{are we now restricting to these kind of schedules?}
\begin{itemize}
\item %The delivery route for each drone $a$ is represented by an ordered set $T_a$ of triplets $(u, w, t)$, where the drone picks up the package at location $u$ at time $t$ and delivers it to location $w$. Formally, each triplet must satisfy $u, w \in V_a$. Additionally, for any two consecutive triplets $(u, w, t)$ and $(u', w', t')$, the constraint $t' \geq t + \frac{\ell\{u, w\}}{v_a}$ must hold. For the first triplet, it is required that $t \geq 0$. In the case of DDT with predefined initial positions, the first triplet must also satisfy $u = p_a$. However, for DDT without predefined initial positions, this condition is not necessary, as the initial position $u$ can be chosen freely. 
%\todo{S: but is it one single triplet then? - or two? (one more for the empty phase or is this implicit?)}
The delivery route for each drone $a$ is represented by a %sorted set $T_a$ of  
triplet $(u, w, t)$, where agent $a$ picks up the package at time $t$ at location $u$ and drops it off at location $w$. Formally, each triplet must satisfy that $u, w \in V_a$.   
For the DDT with predefined initial positions, each involved drone $a$ must include an additional initial empty phase: $(p_a, u, 0)$, where $a$ moves from its initial position $p_a$ to $u$ at time $0$, in preparation for its delivery phase $(u, w, t)$. It must hold that $t\ge \frac{L_a(p_a, u)}{v_a} $ where $L_a(p_a, u)$ represents the length of the shortest path from $p_a$ to $u$ in the agent $a$'s subgraph $G_a$. 
In contrast, in DDT without predefined initial positions, 
%\todo{A: I dont believe we define what DDTSP is, do we?}, 
the empty phase is not required; that is, the triplet must satisfy $t \ge 0$ and $u = p_a$, %\todo{Shouldn't this be $t\geq 0$?}
%\todo{where $d$ is the agent that picks up the package initially?, S: this should be $a$, I think},
as we can directly select $u$ as the initial position. 

\item A feasible solution includes the package route in the form of tuples $(u, w, t, a)$, where the package moves with drone $a$ from $u$ to $w$ at time $t$. The first tuple must satisfy $u = s$, and the last must satisfy $w = y$.  
Formally, each tuple must satisfy that $u, w \in V_a$ and $(u, w, t) \in T_a$. 
%and for consecutive tuples $(u, w, t, a)$ and $(u', w', t', a')$, it must hold that $w = u'$, meaning the drop-off location of the previous tuple is the pick-up location of the subsequent tuple. In addition, $t + \frac{L_a(u, w)}{v_a} \leq t'$, as the later drone picks up the package only after the previous drone drops it off.
Furthermore, for consecutive tuples $(u, w, t, a)$ and $(u', w', t', a')$, the condition $w = u'$ must hold, ensuring the route is continuous. Additionally, the timing constraint $t + \frac{L_a(u, w)}{v_a} \leq t'$ must be satisfied, ensuring that the subsequent drone picks up the package only after the previous drone drops it off.

\item  We define $t(S) = t + \frac{L_a(u, w)}{v_{a}}$ as the duration of this solution, where $(u, w, t, a)$ is the last tuple where $w= y$. The objective is to minimize $t(S)$.
\end{itemize}

In this paper, we study scenarios where the given $G$ represents either a line (path) graph or a grid graph, with details provided in the corresponding sections. Specifically, we consider DDT-Line with predefined initial positions and DDT-Grid without initial positions.
%\todo{A: Im not terribly happy with this, how do we get the optimal delivery time then? just by taking the last $t$?S: yes, but I think that fine right?}

% We further differentiate between variants of these problems based on whether initial positions are included.  
% On one hand, we consider drones with initial positions denoted by $p_a$, a common assumption in early studies~\cite{erlebach:drones}. 
% %\todo{Kelin add literature}
% %\todo{Later, in Section~\ref{npgrid} conflict with Our results - Andi doesnt get it, you mean there is a structural problem like when we have these topics?}
% On the other hand, we explore scenarios without initial positions, omitting $p_a$ for all agents in set $A$. This scenario allows us to choose the placement of each agent, while the rest of the setting remains unchanged. 
%additionally consider the other case of no initial positioning (omitting the $p_a$ for all $a\in A$). Then we are able to chose where to place an agent. Apart from that it is the same setting.
\section{Hardness result on a line}
\label{npline}

%In 2022 Erlebach et al.~\cite{erlebach:drones} showed that the Drone Delivery Problem is NP-hard for general graphs, as well as the special case on the line\cite{erlebach:drones}. 
%Their construction in the case of the delivery on a line it is necessary that there are different speeds in order of input size.  We take another step and show that the problem remains NP-hard even if we restrict the instance to agents of only two different speeds. For that, we show a reduction from Partition to the aforementioned problem. The Partition problem is a very well studied NP-complete problem providing a convenient structure for our proof \cite{garey1979computers}. Our reduction closes the gap between trivially solvable instances with one speed (all agents move towards and follow the package) and previously known results with $O(n)$ speeds. We will refer to delivery on a line as DDTL.

For the problem on a line, it is intuitive to conceptualize the drones' movement areas as a series of intervals. Specifically, for an agent $a$ with its movement area $V_a$ connected by edges $E_a$, we can represent $a$ as an interval $[u,u']$, where $u$ is the leftmost vertex of $V_a$  and $u'$ is the rightmost vertex. This interval notation helps simplify the representation of each agent’s range of movement on the line, facilitating easier analysis of their interactions and potential overlap within the path graph.  Figure~\ref{fig:DDTlineexample} gives an example.
%\todo{S: kelin could you also add a dot at $p_4$?}
    \begin{figure}
        \centering
        \includegraphics[scale=0.55]{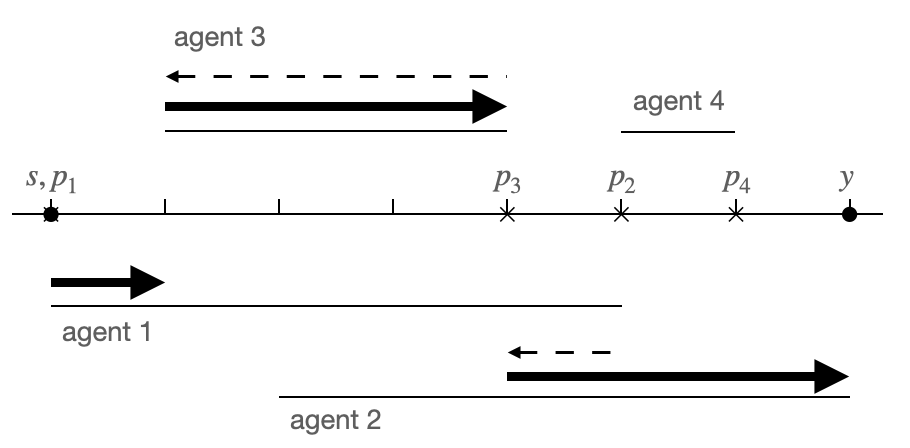}
        \caption{An example of a DDT-Line instance and its optimal solution is depicted: The solid line from $s$ to $y$ forms the path graph, where each solid edge has a distance of $1$. The initial positions of four drones are $p_1=s, p_2, p_3, p_4$, with the solid lines above and below representing each drone's movement subgraph. Suppose drone $1$ and $2$ have a speed of $1$, drone $3$ and drone $4$ have a speed of $3$. The optimal solution for this example takes a total time of $ \max\{1, 3/3\} + 3/3 + 3 = 5$ by sequentially using drones $1 $,  $ 3$ and $ 2$ to deliver the package from  $s$ to $y$, as indicated by the bold solid arrows. The dashed arrows represent the respective empty phases.} 
        \label{fig:DDTlineexample}
    \end{figure}

The hardness of the DDT problem on a line (DDT-Line) was first demonstrated in~\cite{erlebach:drones}. Their construction requires that all drones have varying speeds proportional to the input size.  Furthermore, they noted that the DDT-Line can be solved in polynomial time if all drones operate at the same speed, as all agents would simply move towards and follow the package. 
We advance this research by showing that the problem remains NP-hard even if we limit the scenario to agents with only two different speeds. 

\thmline*

We achieve this through a reduction from the well-studied NP-complete Partition problem~\cite{garey1979computers}, which provides a convenient structure for our proof. This reduction closes the gap between trivially solvable instances with unit speed, and the more complex scenarios previously established with $O(k)$ different speeds. First we discuss the construction and high level idea of the construction and later in the section we state a formal proof.

Let a set of positive integers $M= \{p_1,...,p_n\}$ be a partition problem instance with $ |M|= n$ and $P=\sum_{i\in [n]} p_i$. Without loss of generality, we assume that the numbers in $M$ are arranged in non-descending order, i.e.\ $p_i\leq p_j, \forall i<j$. The Partition problem involves determining whether there exists a partition $S\subset M$ such that $\sum_{S}p_i=\sum_{M\setminus S}p_i$.  In our DDT-Line instance, 
 we want each $p_i$ to be associated with an agent in DDT-Line and to offer two options: help on the left-side path or help on the right-side path. Then, we construct the scenario where the optimal (fastest) schedule is achieved if and only if the agents are arranged so that the sum of their associated integers 
on both sides equals exactly $\frac{P}{2}$. 

We now begin constructing an instance for the DDT-Line.    
To assist with the illustration, we have provided two figures:  
Figure \ref{fig:line_2speed} shows the layout of all drones' operating areas within a path graph, while Figure \ref{fig:line_2speed_gadget} provides a detailed view of the construction. Two different speeds are represented by colors in both Figure~s: red for the fast agents and black for the slow ones.    
%\textcolor{blue}{Without loss of generality, We assume the numbers in $M$ are arranged in ascending order., i.e.\ $p_i\leq p_j, \forall i<j$. Then, }

For every $p_i$ from the Partition instance, we associate one slow agent $e_i$ (referred to as the \emph{element agent}), two fast agents $f_i^{l}$ and $f_i^{r}$, as well as two slow agents $b_i^l$ and $b_i^r$ (referred to as the \emph{base agents}) for the DDT-Line.  
For every $b_i^l$ (resp, $b_i^r$), there is a fast agent $h_i^l$ (resp, $h_i^r$) (referred to as the \emph{helping agents}).  For simplification, we use $b_i$ (resp. $ h_i $, $ f_i$) to denote either $b^l_i$ and  $b^r_i$ (resp, $h^l_i$ and  $h^r_i$, $f^l_i$ and  $f^r_i$). 
Additionally we have the slow auxiliary agents $d$ and $p$, as well as the fast auxiliary agent $q$. Agent $d$ serves as a \emph{delay}. Every feasible schedule has to assign the package to $d$ to initiate delivery from starting point $s$ and wait for it to traverse the interval of $d$. By selecting an appropriate length for $d$, we ensure that there is sufficient time for the $e_i$ agents to reach their designated helping spot (empty phase). Note that this \emph{delay} is a commonly employed technique in constructing hardness proofs for the DDT.   

\begin{figure}[ht]
    \centering
    \input{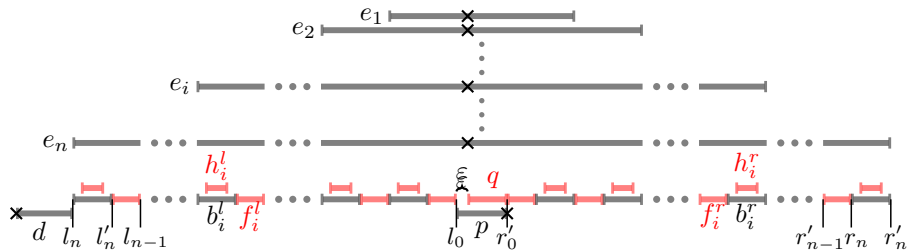}
    \caption{A sketch of the DDT-Line instance construction. There are $3n+2$ slow agents $\{e_i, b^l_i, b^r_i\}_{i\in [n]} \cup \{d, p\}$, $4n+1$ fast agents $\{f^l_i,f^r_i, h^l_i, h^r_i\}_{i\in [n]} \cup \{q\}$. The starting point $s$ is the leftmost node of agent $d$'s interval, and the destination $y$ is the rightmost point of agent $e_n$'s interval. In the middle part we have a gap of size $\varepsilon$ which can only be traversed by the $e_i$'s or $p$. Assigning $e_i$'s to the left side such that the respective $p_i$'s sum up to exactly $\frac{P}{2}$, the package meets $p$ at $l_0$ without waiting time.}
\label{fig:line_2speed}
\end{figure}

Observe that in the DDT-Line instance, we have alternating slow ($b_i$) and fast ($f_i$) base agents covering the main part of the $s-y$ interval. These agents cover the entire range except for a small gap of length  $\varepsilon$ in the middle, which is covered by the designated package carrier $p$ as well as the element agents. However, the instance is designed in a way such that we will have to choose $p$ here to achieve optimal delivery time. The $f_i$ agents are strategically designed so that no optimal schedule can bypass them. For the intervals corresponding to $b_i$, we have two options: either stick entirely with $b_i$ or use $h_i$ and a feasible $e_i$. The idea is that agent $h_i$ only benefits our schedule if we deliver the package to it and then pick it up from its right-most point by ``another agent''; otherwise, simply staying with $b_i $ yields the same solution. This introduces the role of $e_i$ as the ``another agent''. An agent $e_i$ acts as the counterpart to $b_i$ by either delivering to or picking up from $h_i$'s interval. Although any $e_j$ with $j>i$ could also play the respective counterpart or even stand in for $b_i$, this will not happen for any optimal solution, as we will discuss later. A detailed illustration of the $b_i$ and $h_i$ construction is provided in Figure  \ref{fig:line_2speed_gadget}.  

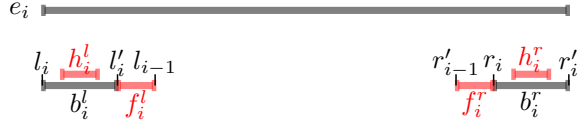
\begin{figure}[ht]
    \centering
    \begin{tikzpicture}[scale=0.5]
\definecolor{dgreen}{RGB}{13, 138, 8}
\draw[black, opacity=0.5, line width=2.7pt]   (11.06, 2) -- (24.94, 2);
\draw[black, opacity=0.5, line width=0.06*1cm/1pt]   (11.03, 2.15) -- (11.03, 1.85);
\draw[black, opacity=0.5, line width=0.06*1cm/1pt]   (24.97, 2.15) -- (24.97, 1.85);
\draw[black, opacity=0.5, line width=2.7pt]   (11.06, 0) -- (12.94, 0);
\draw[black, opacity=0.5, line width=0.06*1cm/1pt]   (11.03, 0.15) -- (11.03, -0.15);
\draw[black, opacity=0.5, line width=0.06*1cm/1pt]   (12.97, 0.15) -- (12.97, -0.15);
\draw[red, opacity=0.5, line width=2.7pt]   (11.56, 0.3) -- (12.44, 0.3);
\draw[red, opacity=0.5, line width=0.06*1cm/1pt]   (11.53, 0.44999999999999996) -- (11.53, 0.15);
\draw[red, opacity=0.5, line width=0.06*1cm/1pt]   (12.47, 0.44999999999999996) -- (12.47, 0.15);
\draw[red, opacity=0.5, line width=2.7pt]   (13.06, 0) -- (13.94, 0);
\draw[red, opacity=0.5, line width=0.06*1cm/1pt]   (13.03, 0.15) -- (13.03, -0.15);
\draw[red, opacity=0.5, line width=0.06*1cm/1pt]   (13.97, 0.15) -- (13.97, -0.15);
\draw[black, opacity=0.5, line width=2.7pt]   (23.06, 0) -- (24.94, 0);
\draw[black, opacity=0.5, line width=0.06*1cm/1pt]   (23.03, 0.15) -- (23.03, -0.15);
\draw[black, opacity=0.5, line width=0.06*1cm/1pt]   (24.97, 0.15) -- (24.97, -0.15);
\draw[red, opacity=0.5, line width=2.7pt]   (23.56, 0.3) -- (24.44, 0.3);
\draw[red, opacity=0.5, line width=0.06*1cm/1pt]   (23.53, 0.44999999999999996) -- (23.53, 0.15);
\draw[red, opacity=0.5, line width=0.06*1cm/1pt]   (24.47, 0.44999999999999996) -- (24.47, 0.15);
\draw[red, opacity=0.5, line width=2.7pt]   (22.06, 0) -- (22.94, 0);
\draw[red, opacity=0.5, line width=0.06*1cm/1pt]   (22.03, 0.15) -- (22.03, -0.15);
\draw[red, opacity=0.5, line width=0.06*1cm/1pt]   (22.97, 0.15) -- (22.97, -0.15);
\draw[black, line width=0.02*1cm/1pt]   (11, 0.3) -- (11, 0);
\draw[black, line width=0.02*1cm/1pt]   (13, 0.3) -- (13, 0);
\draw[black, line width=0.02*1cm/1pt]   (23, 0.3) -- (23, 0);
\draw[black, line width=0.02*1cm/1pt]   (25, 0.3) -- (25, 0);
\draw[black, line width=0.02*1cm/1pt]   (22, 0.3) -- (22, 0);
\draw[black, line width=0.02*1cm/1pt]   (14, 0.3) -- (14, 0);
\node[] at (12, -0.5) {\color{black}$b_i^l$};
\node[] at (13.5, -0.5) {\color{red}$f_i^l$};
\node[] at (24, -0.5) {\color{black}$b_i^r$};
\node[] at (22.5, -0.5) {\color{red}$f_i^r$};
\node[] at (10.4, 2) {\color{black}$e_i$};
\node[] at (12, 0.7) {\color{red}$h_i^l$};
\node[] at (24, 0.7) {\color{red}$h_i^r$};
\node[] at (14, 0.6) {\color{black}$l_{i-1}$};
\node[] at (11, 0.6) {\color{black}$l_i$};
\node[] at (13, 0.6) {\color{black}$l_i'$};
\node[] at (23, 0.6) {\color{black}$r_i$};
\node[] at (22, 0.6) {\color{black}$r_{i-1}'$};
\node[] at (25, 0.6) {\color{black}$r_i'$};
\end{tikzpicture}
    \caption{A close-up on $p_i$'s corresponding agents: slow agent $e_i$, $b_i^l $ and $ b_i^r$;  fast agents $f_i^l $, $ f_i^r$, $h_i^l$, and $ h_i^r$. From the construction, it is clear that  $e_i$ can only help only on either the left or right side interval, utilizing the fast drones $h_i$ on that side. A helping agent $h_i$ is utilized by assigning $e_i$ to either pick up the package from or deliver it to $h_i$. Not assigning $e_i$ leaves $h_i$ ineffective, as the delivery time is only dependent on $b_i$ as it has to catch up and pick up the package again.}
\label{fig:line_2speed_gadget}
\end{figure}

Building on the above observation, we will briefly outline the design goals for these intervals. 
It is important to note that the interval length of each agent $b_i$ depends on the underlying $p_i$, allowing us to construct both sides of the partition instance effectively. This means that the interval length of $b_i$ grows with the size $p_i$.  On the boundaries of the instance, the $b_i$'s have the largest lengths while they decrease as we move to the middle. It is clear that it benefits the delivery time the most if an element agent helps a larger $b_i$. To prevent any $e_i$ from helping any $b_j$ with $j>i$, we previously ordered the $p_i$ (and correspondingly the $e_i$) in increasing sequence and ensure that all $b_j$ with $j>i$ are inaccessible to $e_i$ (see Figure~\ref{fig:line_2speed}). As a result, $e_i$ can only help $b_j$ with $j\leq i$, while the optimal helping spot is then $b_i$ (either $b^l_i$ or  $b^r_i$). If $e_i$ would help at some $b_j$ with $j<i$ the schedule could be improved through an exchange argument. As large parts of the instance can be traversed with fast speed ($f_i$'s), it will not happen for any $e_i$ to help multiple $b_j$'s. This is exactly why there has to be made a choice between left and right by every $e_i$ to help a designated $b_j$ ($j\leq i$) which is $b_i^r$ or $b_i^l$ optimally.

To determine if the Partition instance is a ``yes''-instance, we use agent $p$. This agent is strategically positioned so that the package reaches agent $p$'s leftmost border precisely when the underlying $p_i$'s of the helping $e_i$'s sum up to exactly $\frac{P}{2}$ on the left side. Agent $p$ is positioned at its right border and upon initiating the delivery it starts moving to the left. Therefore we have to choose the length of $p$ to coincide with the exact time required to traverse the left side of the instance (including $d$) assuming we assign the $e_i$ agents corresponding to a feasible partition.
Only $p$ and all $e_i$ agents can traverse the middle part of our instance. The idea is that if we can not meet agent $p$ at the perfect time or use some agent $e_i$ instead of agent $p$ to traverse the middle gap, optimal delivery time cannot be achieved. %To make sure that $p$ does not help any $b_i^r$ (picking from or delivering the package to $h_i^r$), $p$ is (almost) completely covered by a fast agent $q$ which is supposed to travel over that middle part.   
Note that $p$ also cannot help any other $b_i^l$ or $b_i^r$ agents as there is no overlap between them.
There is another nuance to consider when it comes to the interval lengths of the agents $\{b^l_i\}_{i\in [n]}$. Assume for a moment that $b_i^l$ and $b_i^r$
were of equal length, we could achieve an optimal schedule without accurate partitioning – simply by assigning all agents to the right side, gaining the same benefits from helping and meeting $p$ on its left border after it has waited for the package. This strategy would result in the delivery time as if we had implemented a correct partition on both sides. To address this issue, we make  $b_i^l$ slightly larger by a factor of  $1+\frac{1}{P}$ to incentivize helping $b^l_i$. %Then it is clear, that we lose optimal delivery time by not considering the left side in the assignment as slower agents cover larger intervals.
Then it is clear, that we lose optimal delivery time by assigning more agents to the right side as slower agents cover larger intervals. On the other hand, assigning agents corresponding to more than $\frac{P}{2}$ to the right side results in the package having to wait for $p$ and thus also losing the optimal delivery time, therefore the optimal schedule assigns the agents corresponding to a feasible partition, if possible. For completeness, we now state a formal proof of Theorem \ref{thm:line} and give the precise construction.

\paragraph*{Proof of Theorem 1.}
\label{sec:appendixDDTL}
First, we formally define the instance, beginning with variables corresponding to the leftmost and rightmost points of the intervals: $l_i $ and $ l_i'$ for the left side of agent $p$'s interval, $r_i$ and $r_i'$ for right of agent $p$'s interval, respectively (see Figure \ref{fig:line_2speed} and \ref{fig:line_2speed_gadget}). For $l_n$, we define it as $l_n=(2n+2)P^2-\frac{P}{2}-\frac{n}{2}-\frac{1}{2P}-1 + \varepsilon$, which represents the length of delay agent $d$ (i.e.\ the offset from $0$). We can write the following recursions for the left side:
\begin{align*}
    \forall i\in\{1,...,n\}:\, l_i'=l_i+(1+\frac{1}{P})p_i \\
    \forall i\in\{0,...,n-1\}:\, l_i=l_{i+1}'+P
\end{align*}
Correspondingly for the right side we can write a similar recursion. Serving as the base, $r_0'=l_0+(2n+2)P^2+\varepsilon$. It follows that
\begin{align*}
    \forall i\in\{1,...,n\}: r_i'=r_i+p_i \\
    \forall i\in\{1,...,n\}: r_i=r_{i-1}'+P.
\end{align*}
We can now define all agents' movement intervals using the above specified variables. Figure \ref{fig:line_2speed} also illustrates the structure of these variables. The available speeds are $1$ and $2P$. We define agent set $A$  to include: 
\begin{itemize}
    \item[$\bullet$] $d=(0,1,[0,l_n])$
    \item[$\bullet$] $p$ = $(r_0',1,[l_0,r_0'])$ 
    \item[$\bullet$] $q$ = $(r_0',2P,[l_0+\varepsilon,r_0'])$
    \item[$\bullet$] $\forall i\{1,...,n\}: b_i^l=(l_i,1,[l_i,l_i'])$
    \item[$\bullet$] $\forall i\{1,...,n\}: f_i^l=(l_i',2P,[l_i',l_{i-1}])$
    \item[$\bullet$] $\forall i\{1,...,n\}: b_i^r=(r_i,1,[r_i,r_i'])$
    \item[$\bullet$] $\forall i\{1,...,n\}: f_i^r=(r_{i-1}',2P,[r_{i-1}',r_i])$
    \item[$\bullet$] $\forall i\{1,...,n\}: e_i = (l_0+\varepsilon,1,[l_i,r_i'])$
    \item[$\bullet$] $\forall i\{1,...,n\}: h_i^l=(l_i+\delta,2P,[l_i+\delta,l_i'-\delta])$ with $\delta=(1+\frac{1}{P})\frac{p_i}{4P-2}$
    \item[$\bullet$] $\forall i\{1,...,n\}: h_i^r=(r_i+\delta',2P,[r_i+\delta',r_i'-\delta']$ with $\delta'=\frac{p_i}{4P-2}$
\end{itemize}

The value $\delta$ ($\delta'$ resp.) is chosen in a way such that the delivery time over the interval corresponding to $b_i^l$ ($b_i^r$) is $(1+\frac{1}{P})p_i$ ($p_i$) if there is no help and $(1+\frac{1}{P})\frac{p_i}{P}$ ($\frac{p_i}{P})$ otherwise. Essentially, a factor of $\frac{1}{P}$ is saved over the interval $b_i$ by assigning a suitable $e_j$ to help – optimally $e_i$ for the fastest overall schedule.

Consider the package to be $(0,r_n')$. 
We want to prove that the input of Partition is a ``yes''-instance if and only if there exists a schedule of time at most $t=(2n+2)P^2+(n+\frac{3}{2})P + \frac{n}{2} + \frac{1}{2} + 2\varepsilon$. Note that this is exactly the time a schedule needs if a feasible solution for Partition is assigned to each side of the DDT-Line instance. The threshold time $t$ exactly captures our previous observations: We assign element agents $e_i$ to both sides such that their underlying $p_i$'s sum up to exactly $\frac{P}{2}$; therefore we meet $p$ exactly on arrival at $l_0$ carrying the package over the gap and from there utilizing $q$ and proceeding with the right part of the instance.
 
On the other hand, if the underlying Partition instance is a ``no''-instance, then we can not beat the threshold time $t$. We argue that it is not feasible to skip agent $p$ or any $\{f_i^l, f_i^r\}_{i\in [n]}$. Assuming this is true, if we do not meet $p$ at the perfect time, the package either has to wait for agent $p$, or agent $p$ has to wait for the package.
In either case, achieving the desired threshold time is unattainable. In the first case, agents on the right cannot make up for the delay, and in the second case, we lost too much time on the left side as the intervals (corresponding to the $b_i^l$) are slightly larger. 

% \begin{proof}
%     We want to prove that the input of Partition is a ``yes''-instance if and only if there exists a schedule of time at most $t=(2n+2)P^2+(n+\frac{3}{2})P + \frac{1}{2} + 2\varepsilon$. Note that this is exactly the time schedule needs assigning a correct solution for Partition to each side of the DDTL instance. Notice by the choice of the length of the helping agents ($h_i^l$ and $h_i^r$) that the time to traverse the interval corresponding to $b_i$ is equal to $p_i$ with no help ($b_i$ carries package entirely) and with help shrinks to $\frac{p_i}{P}$ ($h_i$ carries over its subpart, $e_i$ and $b_i$ do the rest).
    
We will show two different lemmas to help proving our statement.

    \begin{lemma}
    \label{lemma:noskip}
        Any optimal schedule does not skip any agent $f_i^l$ or $f_i^r$ for all $i\in \{1, \dots, n\}$.
    \end{lemma}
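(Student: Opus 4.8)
The plan is to argue by a local exchange that a fast base agent can always be inserted into the package's route at no cost, and that omitting one is strictly wasteful. First I would record a structural fact about who can cross a fast interval. Assuming (as is standard on a line) that the package moves monotonically from $s$ to $y$, consider the interval $[l_i',l_{i-1}]$ associated with $f_i^l$. By construction every agent other than $f_i^l$ has a movement area meeting $[l_i',l_{i-1}]$ only at one of its endpoints: the base agents $b_k^l$, the helping agents $h_k^l$, the delay $d$, the carrier $p$, and $q$ all lie in intervals that tile the line and touch $[l_i',l_{i-1}]$ only at $l_i'$ or $l_{i-1}$. The sole exception is the slow element agents $e_j$ with $j\ge i$, whose areas $[l_j,r_j']$ contain the whole interval. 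Hence across the interior of $[l_i',l_{i-1}]$ the package can be carried only by $f_i^l$ (speed $2P$) or by slow element agents (speed $1$); the analogous statement holds for $[r_{i-1}',r_i]$ with $f_i^r$ and $e_j$, $j\ge i$.

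I would then treat the right-side agents first, where the argument is cleanest. Suppose an optimal schedule skips $f_i^r$. Then the package crosses $[r_{i-1}',r_i]$, of length $P$, using only speed-$1$ agents, spending time at least $P$, whereas $f_i^r$ crosses it in $P/(2P)=\tfrac12$. Since $f_i^r$ starts at the left endpoint $r_{i-1}'$ of its own interval, it is available there from time $0$ with no empty phase, so I can reroute: hand the package to $f_i^r$ the instant it reaches $r_{i-1}'$, deliver it to $r_i$ after time $\tfrac12$, and let it be picked up there by exactly the agent that took over at $r_i$ in the original schedule. The continuation is seamless because that successor receives the package at the same vertex, only earlier. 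Crucially, the whole stretch to the right of $l_0$ is \emph{package-limited}: every $b_k^r,f_k^r$ sits at its pickup vertex from time $0$, $q$ reaches $l_0+\varepsilon$ long before the package, and every element agent reaches its station during the enormous delay $l_n=\Theta(P^2)$, which dwarfs all empty-phase distances $O(nP)$. Thus advancing the package at $r_i$ advances its arrival at $y$ by the same amount, yielding a strictly faster schedule and contradicting optimality.

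For a left-side agent $f_i^l$ the same reroute makes the package reach $l_0$ at least $P-\tfrac12$ earlier, and the left stretch is package-limited for the same reason. The subtlety is that the left feeds into the forced rendezvous with $p$ at $l_0$: if the sped-up package would now reach $l_0$ before $p$, the gain is absorbed by waiting for $p$, and the reroute alone need not improve the total time. I would resolve this with a second exchange: the slow element agent $e_j$ that performed the crossing is freed by the reroute (or, if its carry also covered the helped part of an adjacent base interval, it is truncated at $l_i'$ while still providing that help), and since $n$ element agents cannot help all $2n$ base intervals, there is an unhelped base interval within $e_j$'s reach — necessarily on a package-limited part of the route — to which $e_j$ can be reassigned, strictly reducing the time there. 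Combining the two exchanges yields a strictly faster schedule, so no optimal schedule skips $f_i^l$ either.

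The step I expect to be the main obstacle is precisely this reconciliation on the left: showing that the local speed-up from inserting $f_i^l$ is never fully and harmlessly absorbed by the wait at $p$. This is where the exact calibration of the instance enters — the delay length $l_n$, the $1+\tfrac1P$ inflation of the left base intervals, and the choices of $\delta,\delta'$ fixing the helped and unhelped crossing times — and I would lean on these to guarantee that a freed element agent always has a productive reassignment, so that committing a slow agent to a fast interval is strictly dominated by using $f_i^l$ together with that agent's help elsewhere.
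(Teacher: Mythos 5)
Your structural fact (only $f_i^l$, respectively $f_i^r$, or slow element agents can cross the interior of a fast stretch) is correct, and your right-side insertion is sound in outline, but the proposal has two genuine gaps, and it diverges from the paper's much simpler argument. First, your justification of the ``package-limited'' property is factually wrong on the right side: the element agents do \emph{not} reach their right-side stations during the delay $d$. To reach any $r_k$ an element agent must first traverse agent $q$'s stretch of length $(2n+2)P^2$ at speed $1$ -- a distance comparable to, not dwarfed by, the delay -- so its arrival is tight, and the paper verifies it only by a separate pessimistic computation in the proof of Theorem~\ref{thm:line} (comparing $t_{e_n}=(n+1)P-p_n$ with the package's traversal time). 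Your claim that all empty phases are $O(nP)$ and hence absorbed by the delay is false for these stations, so the strict propagation of the local gain to $y$ is asserted rather than proved.

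Second, and more seriously, your left-side repair fails. The freed element agent $e_j$ in general has \emph{no} productive reassignment: (i) its reach is only the intervals $b_k^l,b_k^r$ with $k\le j$, and for small $j$ all of these may already be helped by higher-index element agents, so the counting step ``$n$ agents cannot help $2n$ intervals, hence an unhelped interval lies within $e_j$'s reach'' is a non sequitur; (ii) even when such an interval exists, $e_j$ is released deep inside the left block at the moment of the handoff to $f_i^l$ -- it cannot cross the $(2n+2)P^2$-long middle stretch at speed $1$ to reach any right-side station before the package (which crosses it at speed $2P$), and it cannot overtake the package to act as a right-end counterpart at any downstream left interval either, since the package pulls ahead by $P-\tfrac12$ as soon as $f_i^l$ is used. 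The paper sidesteps all of this global bookkeeping with a purely local dominance computation: it grants the skipping schedule its maximal conceivable benefit, namely one element agent unlocking \emph{both} adjacent helpers $h_i^l$ and $h_{i-1}^l$, and checks that $t_f=(1+\tfrac1P)(p_i+p_{i-1})+\tfrac12 < (1+\tfrac1P)\tfrac{p_i+p_{i-1}}{P}+P=t_{skip}$ for $n\ge 3$, i.e., even the completely unhelped non-skipping traversal beats the doubly-helped skip. Moreover, your worry about absorption at the rendezvous with $p$ can be settled directly by the calibration you invoke but never use: the maximal slack the package can bank at $l_0$ (all left base intervals helped) is $\tfrac P2-\tfrac1{2P}$, which is strictly less than the $P-\tfrac12$ wasted by any skip, so a left skip always makes the package late at $l_0$ and the plain insertion of $f_i^l$ is already strictly improving -- the second exchange you introduce is both unsound and unnecessary.
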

    \begin{proof}
        Assume we want to skip any of the $f_i^l$. The only reason might be that we want to use the corresponding $e_i$ to pick up the package from $h_i^l$ and delivers it to $h_{i-1}^l$. That means $e_i$ carries the package over an additional distance $P$ and we therefore can get value from two helping agents using only one element agent. We assume we are at the left side of the instance as it is more beneficial regarding the helping agents. For the right side the same steps can be repeated.

        Let us compare the delivery times: First of all we have the time of the skipping strategy which is $t_{skip}=(1+\frac{1}{P})(\frac{p_i+p_{i-1}}{P})+P$ for some $i\geq 2$. The delivery time whenever we do not skip $f_i^l$ and not get helped with any of the adjacent intervals is $t_f=(1+\frac{1}{P})(p_i + p_{i-1}) + \frac{1}{2}$.

        It holds that $t_f<t_{skip}$, since

        \begin{align*}
                   (1+\frac{1}{P})(p_i+p_{i-1})+\frac{1}{2} &< (1+\frac{1}{P})(\frac{p_i+p_{i-1}}{P})+P\\
                    p_i + p_{i-1} + \frac{1}{2} &< \frac{p_i+p_{i-1}}{P^2} + \sum_{j\in[n]}p_j\\
                    \frac{1}{2} &< \frac{p_i+p_{i-1}}{P^2} + \sum_{j\in[n]\setminus\{i-1,i\}}p_j,
        \end{align*}
        where the last term is always true for any $n\geq 3$ as every input integer for Partition is at least 1. 
    \end{proof}

    We continue by arguing that is not possible to skip agent $p$. 

    \begin{lemma}
        \label{lemma:skipp}
        A schedule skipping $p$ does not deliver the package within time $t$.
    \end{lemma}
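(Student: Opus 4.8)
The plan is to show that skipping $p$ necessarily sacrifices one element agent's helping contribution, and that this loss pushes the delivery time above $t$ even though skipping $p$ lets the schedule dodge all waiting at the gap. First I would note that the middle gap $[l_0,l_0+\varepsilon]$ lies outside every agent's movement area except $p$ and the element agents $e_i$: the fast carrier $q$ only reaches $l_0+\varepsilon$, and each $b_i,f_i,h_i$ is confined strictly to one side. Hence, if $p$ is skipped, some $e_j$ must carry the package across the gap. Because every $e_j$ starts at $l_0+\varepsilon$, it can be pre-positioned at $l_0$ and pick the package up the instant it arrives, so the crossing costs only the unavoidable $\varepsilon$ and introduces \emph{no} waiting. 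This is exactly why the generic ``waiting'' argument used for imbalanced partitions does not apply here and a separate argument is required.

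The heart of the proof is that this crossing agent $e_j$ cannot \emph{also} help any base interval, so that at most $n-1$ element agents help. I would rule out the two orderings separately. If $e_j$ first helped some left interval $b_{i'}^l$ and then crossed, it would afterwards have to travel from roughly $l_{i'}$ back to $l_0$ at unit speed, covering $\Theta(i')$ fast segments each of length $P$; the package, which crosses those same segments partly at speed $2P$, reaches $l_0$ far sooner and would have to wait $\Omega(P)$ for $e_j$, a penalty dwarfing any possible saving. If instead $e_j$ crossed first and tried to help a right interval, it would have to overtake $q$, impossible since $e_j$ has speed $1$ and $q$ has speed $2P$. Thus the crosser yields no net help.

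To convert ``one fewer helper'' into a time bound, I would use that, without waiting, the delivery time equals a partition-independent constant $C=l_n+\varepsilon+(n+1)P+n$ plus the total time spent traversing the $2n$ base intervals. Helping $b_i^l$ (resp.\ $b_i^r$) reduces its traversal time by the fixed amount $(1-\tfrac{1}{P^2})p_i$ (resp.\ $(1-\tfrac{1}{P})p_i$), so minimizing delivery time is precisely maximizing total saving over at most $n-1$ helped intervals. I would argue the maximum is attained by helping the $n-1$ largest \emph{left} intervals (left savings dominate right savings index by index, and respecting the ordering constraint the crosser should be $e_1$ so that $e_2,\dots,e_n$ match $b_2^l,\dots,b_n^l$), giving total saving $(1-\tfrac{1}{P^2})(P-p_1)$. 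Comparing against the balanced schedule, which uses $n$ helpers and attains exactly $t$, the residual reduces to the elementary inequality $(1+\tfrac{1}{P})p_1>\tfrac12$, which holds since every $p_i\ge 1$; this forces the delivery time strictly above $t$.

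I expect the main obstacle to be the combinatorial optimization in the last step: cleanly proving that the best use of $n-1$ helpers is ``all on the left, dropping the smallest'' requires both the index-by-index domination of left over right savings and a Hall-type matching argument showing the ordering constraint does not obstruct the optimal drop (which is what forces the crosser to be $e_1$). The timing argument ruling out ``help-then-cross'' is the other delicate point, but there the scale separation — an $\Omega(P)$ forced wait against an $O(P)$ saving, driven by the length-$P$ fast segments — makes the conclusion robust.
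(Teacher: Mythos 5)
Your proposal is correct and follows essentially the same route as the paper: the paper likewise reduces any $p$-skipping schedule to the best ``greedy'' strategy in which one element agent (namely $e_1$, the least valuable) crosses the gap while $e_2,\dots,e_n$ help the larger left intervals, and your savings computation reproduces the paper's numbers exactly (your total saving $(1-\tfrac{1}{P^2})(P-p_1)$ yields the paper's $t_{greedy}=P+p_1+1+\tfrac{1}{P}-\tfrac{p_1}{P^2}$, compared against $t^*=P+\tfrac{3}{2}+\tfrac{1}{2P}$, with the conclusion driven by $p_1\ge 1$). The only differences are cosmetic or additive: the exact residual is $p_1(1-\tfrac{1}{P^2})+\tfrac{1}{2P}>\tfrac12$ rather than your stated $(1+\tfrac{1}{P})p_1>\tfrac12$ (both trivially true for $p_1\ge 1$), and your explicit timing argument ruling out a crosser that also helps fills in a step the paper only asserts when declaring the greedy strategy fastest among $p$-skipping schedules.
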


    \begin{proof}
     If we want to skip agent $p$ we have to assign some element agent (say $e_1$ as it has the least impact) to carry the package over the gap. However, if we do not use $p$ anyway we might as well get all gains on the left side as they are more beneficial by a factor $1+\frac{1}{P}$. The resulting strategy (call it the \emph{greedy strategy}) is the fastest among those skipping $p$. Lemma \ref{lemma:noskip} implies that all $f_i$ are part of our solutions. Therefore the proposed strategy differs only in the assignment of the element agents. The delay, the $f_i$ agents, as well as the middle part (agent $q$ as well as $p$ or $e_1$) have to be present in all schedules.
     This boils down the differences in delivery time to $t_{greedy}= (1+\frac{1}{P})(\frac{\sum_{2}^n p_j}{P}+p_1)+P$ for the greedy strategy. The first term result from the greedy left part and the second part represents the right part. For the schedule utilizing a perfect partition we have that $t^*=(1+\frac{1}{P})(\frac{1}{2}+\frac{P}{2})+ \frac{1}{2}+\frac{P}{2}$, where once again the first term represents the left side  and the other terms represent the right side of the instance. We want to show that $t^*<t_{greedy}$. Observe that $t^*=P + \frac{3}{2} + \frac{1}{2P}$ and $t_{greedy}= P + p_1 + 1 + \frac{1}{P} - \frac{p_1}{P^2}$. Together we get that

     \begin{align*}
        \frac{1}{2} + \frac{1}{2P} < p_1 + \frac{1}{P} - \frac{p_1}{P^2},
     \end{align*}
     which is true since $p_1\geq 1$ and $\frac{p_1}{P^2} \leq \frac{1}{2}$ for $n\geq 2$, proving the lemma.      
    \end{proof}

    Lemma \ref{lemma:noskip} together with Lemma \ref{lemma:skipp} imply that a schedule beating time $t$ needs to use $p$ to carry the package over the gap and assign each element agent $e_i$ to exactly one base agent $b_i$. With the instance construction and lemmas established, we are now ready to prove the following. 

\thmline* 

\begin{proof}
    Assume the input $M$ of Partition is a ``no''-instance, that is, there exists no subset $S\subset M$ with $\sum_{i\in S} p_i = \frac{P}{2}$. This implies that we can not meet $p$ at the perfect time. Either we allocate too little to the left side or too much. In the first case $p$ has to wait at the gap for the package. The gain that is made on the right side is smaller (by a factor of $\frac{1}{P}$) than a perfect partition could have achieved on the left side. Therefore we are too slow in this case. On the other hand, allocating too much to the left results in the package having to wait for $p$ to arrive. Thus the left side is just as fast as if we put a perfect partition. Since a perfect partition has more capacities on the right side, also overshooting on the left side results in a schedule that is too slow.

    There is another nuance to mention. So far we did not show that every agent can reach its desired helping spot in time. This case is especially crucial for agent $e_n$ if it wants to help on the right side. Agent $e_n$, as all element agents, starts at $l_0+\varepsilon$ and has to travel over the stretch (agent $q$). We demonstrate that $e_n$ has enough time to reach $r_n$ (and also $l_n$).

Assume that the input for Partition is a ``yes''-instance. Observe that if $e_n$ starts moving to the right as soon as possible it will be at $r_0'+\varepsilon$ whenever $p$ picks up the package at $l_0$. To reach $r_n$ agent $e_n$ has to travel distance $P-p_n+nP$ taking an equal amount of time $t_{e_n}$. We proceed with a pessimistic analysis: Assume for every $b_i^r$ with $1\leq i<n$ that it is helped. Then the package needs time $t_{package}=\frac{(2n+2)P^2}{2P} +\frac{nP}{2P}+1- \frac{p_n}{P}$, where the first term represents the stretch (agent $q$), the second term all $f_i^r$ and the last two terms the helped base agents. It holds that $t_{e_n}<t_{package}$ since

\begin{align*}
    (n+1)P - p_n &< \frac{(2n+2)P^2}{2P} +\frac{nP}{2P}+1- \frac{p_n}{P}\\
    (n+1)P - p_n &<(n+1)P + \frac{n}{2} + 1 - \frac{p_n}{P}\\
    - p_n &< \frac{n}{2} + 1 - \frac{p_n}{P}
\end{align*}
is true. 

In a similar fashion we can state that $e_n$ has enough time to reach $l_n$, i.e.\ the helping spot on the left side. Starting from $l_0 +\varepsilon$ it takes agent $e_n$ time $nP+(1+\frac{1}{P})P+\varepsilon$ to reach $l_n$. Due to the delay the package arrives at $l_n$ at time $(2n+2)P^2-\frac{P}{2}-1-\frac{1}{2P} + \varepsilon$, which is a sufficient amount of time for $n\geq 2$. As a consequence, $e_n$ has enough time to reach $l_n$. Together with the previous analysis we argued that all agents have enough time to reach their helping spot.

All in all we argued that there can not exist a schedule beating time $t$ whenever the underlying Partition input does not admit a perfect partition. This concludes the proof of Theorem \ref{thm:line}.
\end{proof}
It is said that we presented only a set of feasible values. There exist many more feasible values for speeds and distances that serve our construction.
\section{Hardness result on a grid}
\label{npgrid}
%\todo{maybe we could add a small paragraph somewhere that unit speed rectangular movement area is easy to solve. Andi: i like the idea}
% 

In this section, we study the DDT problem on grid graphs. The grid graph serves as the natural intermediary between a line and a general graph, and they are more closely aligned with applications such as road networks.  Formally, we define a grid graph as follows: The set of vertices is a finite set of integer coordinates $V \subset \mathbb{Z}^2$. 
%For any point with coordinates $x$ and $y$ on the grid, there exists a set of vertices $\{ (a, b)| a,b\in \mathbb{N}, -x \leq a\leq x, -y \leq b\leq y\}$ on the grid. 
An edge of length $1$ connects any two vertices if and only if exactly one of their coordinates differs by exactly one. Therefore we also refer to our grid graph as a \emph{unit grid}. We denote the DDT problem on the unit grid as DDT-Grid. %\todo{simon: can be maybe change this to allow negative coordinates? Andi: i dont see the immediate use of negative coordinates,simon: i use them in my construction. OK, simon: 2 alternatives below}
%For given boundaries $[x_1, x_2]$ and $[y_1, y_2]$ we have the set of vertices $V = \mathbb{Z}^2 \cap [x_1, x_2] \times [y_1, y_2]$.
%For given boundaries $[x_1, x_2]$ and $[y_1, y_2]$ we have a vertex $(x, y) \in \mathbb{Z}^2$ with $x \in [x_1, x_2]$ and $y \in [y_1, y_2]$.

% We will study two cases of DDT problem on grid graphs: the first involves rectangular movement areas with two speeds, discussed in Section~\ref{2grid}, and the second involves arbitrary movement areas with unit speeds in Section~\ref{1grid}. 
% Detailed notations for both cases are provided in the subsequent sections, with Figure~\ref{fig:rectangular_example} offering illustrative examples for each. 

\begin{figure}[ht]
    \centering
    \begin{tikzpicture}[scale=0.9]
\definecolor{b1}{RGB}{0, 0, 0}
\definecolor{b2}{RGB}{4, 75, 145}
\definecolor{b3}{RGB}{76, 68, 194}
\definecolor{b4}{RGB}{50, 140, 190}
\draw[->, line width=1.5] (0, -0.35) -- (3, -0.35);
%\draw[->, dashed, red, line width=1.5] (3.35, 3) -- (3.35, 0);
\draw[<-, red, line width=1.5] (3.3, 3) -- (3.3, 0);
\draw[->, line width=1.5] (3, 3.5000000000000003) -- (1, 3.5000000000000003);
%\draw[<-, dashed, line width=1.5] (3, 3.45) -- (1.1, 3.45);
%\draw[->, dashed, line width=1.5] (1.35, 2) -- (1.35, 2.85);
\draw[->, line width=1.5] (0.9, 3.35) -- (-0.3, 3.35) -- (-0.3, 2);
\def\wi{0.08}
\def\op{0.2}
\def\pts{2.0pt}

\foreach \y in {-1, ...,4} {
\draw[black, line width=\wi, opacity=\op] (-1.3, \y) -- (4.3, \y);
\draw[black, line width=\wi, opacity=\op] (5.7, \y) -- (11.3, \y);
%\draw[black, line width=\wi, opacity=\op] (-46.3, \y) -- (-28.7, \y);
}
\foreach \x in {-1, 0, 1, 2, 3, 4, 6, 7, 8, 9, 10, 11} {
\draw[black, line width=\wi, opacity=\op] (\x, -1.3) -- (\x, 4.3);
%\draw[black, line width=\wi, opacity=\op] (\x, 5.7) -- (\x, 11.3);
%\draw[black, line width=\wi, opacity=\op] (-46.3, \y) -- (-28.7, \y);
}

% \foreach \y in {0,...,23} {
% \draw[black, line width=\wi, opacity=\op] (-25.3, \y) -- (3.3, \y);
% \draw[black, line width=\wi, opacity=\op] (-46.3, \y) -- (-28.7, \y);
% }

\foreach \x in {-1,...,4} {
\foreach \y in {-1,...,4} {
\fill[black, opacity=0.6] (\x, \y) circle (1.5pt);
}
}
\foreach \x in {6,...,11} {
\foreach \y in {-1,...,4} {
\fill[black, opacity=0.6] (\x, \y) circle (1.5pt);
}
}
\draw[b1, line width=1pt, opacity=0.8, rounded corners=3pt] (-0.12, -0.12) rectangle (3.12, 0.12);
\fill[b1, opacity=0.25, rounded corners=5pt] (-0.12, -0.12) rectangle (3.12, 0.12);

\draw[b1, line width=1pt, opacity=0.8, rounded corners=3pt] (1.88, -0.12) rectangle (2.12, 3.12);
\fill[b1, opacity=0.25, rounded corners=5pt] (1.88, -0.12) rectangle (2.12, 3.12);
\draw[b1, line width=1pt, opacity=0.8, rounded corners=3pt] (0.88, 2.88) rectangle (3.12, 3.12);
\fill[b1, opacity=0.25, rounded corners=5pt] (0.88, 2.88) rectangle (3.12, 3.12);
\draw[red, line width=1pt, opacity=0.8, rounded corners=3pt] (2.88, -0.12) rectangle (3.12, 3.12);
\fill[red, opacity=0.25, rounded corners=5pt] (2.88, -0.12) rectangle (3.12, 3.12);
\draw[b1, line width=1pt, opacity=0.8, rounded corners=3pt] (-0.12, 1.88) rectangle (1.12, 3.12);
\fill[b1, opacity=0.25, rounded corners=3pt] (-0.12, 1.88) rectangle (1.12, 3.12);
\draw[black, line width=1pt, rounded corners=3pt](7.88, 3.12) -- (7.88, 0.88) -- (10.120000000000001, 0.88) -- (10.120000000000001, 3.12) -- (8.88, 3.12) -- (8.88, 2.88) -- (9.879999999999999, 2.88) -- (9.879999999999999, 2.12) -- (8.88, 2.12) -- (8.88, 1.88) -- (9.879999999999999, 1.88) -- (9.879999999999999, 1.12) -- (8.12, 1.12) -- (8.12, 3.12) -- cycle;
\fill[black, opacity=0.25, rounded corners=5pt](7.88, 3.12) -- (7.88, 0.88) -- (10.120000000000001, 0.88) -- (10.120000000000001, 3.12) -- (8.88, 3.12) -- (8.88, 2.88) -- (9.879999999999999, 2.88) -- (9.879999999999999, 2.12) -- (8.88, 2.12) -- (8.88, 1.88) -- (9.879999999999999, 1.88) -- (9.879999999999999, 1.12) -- (8.12, 1.12) -- (8.12, 3.12) -- cycle;
\draw[black, line width=1pt, rounded corners=3pt](6.88, -0.12) -- (6.88, 3.12) -- (7.12, 3.12) -- (7.12, 2.12) -- (8.120000000000001, 2.12) -- (8.120000000000001, 1.88) -- (7.12, 1.88) -- (7.12, -0.12) -- cycle;
\fill[black, opacity=0.25, rounded corners=5pt](6.88, -0.12) -- (6.88, 3.12) -- (7.12, 3.12) -- (7.12, 2.12) -- (8.120000000000001, 2.12) -- (8.120000000000001, 1.88) -- (7.12, 1.88) -- (7.12, -0.12) -- cycle;
%\draw[->, dashed, line width=1.5] (6.75, 1) -- (6.75, 0);
\draw[->, line width=1.5] (7.25, 0) -- (7.25, 1.75) -- (7.75, 1.75);
\draw[->, line width=1.5] (8.25, 1.9) -- (8.25, 1.25) -- (9.75, 1.25) -- (9.75, 1.75) -- (9, 1.75);
%\draw[->, dashed, line width=1.5] (8.25, 3) -- (8.25, 2.1);
\filldraw (7, 0) circle (2pt);
\node[below left] at (7, 0) {$s$};
\filldraw (0, 2) circle (2pt);
\node[below left] at (0, 2) {$y$};
\filldraw (9, 2) circle (2pt);
\node[left] at (9, 2) {$y$};
\filldraw (0, 0) circle (2pt);
\node[below left] at (0, 0) {$s$};
% \filldraw[red] (3, 3) circle (2pt);
% \node[above right] at (3, 3) {\textcolor{red}{$p_5$}};
% \filldraw (2, 3) circle (2pt);
\node[] at (9, 0.5) {\large $a_2$};
\node[] at (1.5, 1.5) {\large $a_2$};
\node[] at (1.5, -0.7) {\large $a_1$};
\node[] at (6.5, 1.5) {\large $a_1$};
\node[red] at (3.7, 1.5) {\large $a_5$};
\node[] at (-0.6, 2.8) {\large $a_3$};
\node[] at (2, 3.7) {\large $a_4$};
% \filldraw (1, 2) circle (2pt);
% \node[below right] at (1, 2) {$p_3$};
% \filldraw (1, 3) circle (2pt);
% \node[above right] at (1, 3) {$p_4$};
% \filldraw (0, 3) circle (2pt);
% \node[above left] at (0, 3) {$y$};
% \filldraw (7, 1) circle (2pt);
% \node[above left] at (7, 1) {$p_1$};
% \filldraw (8, 3) circle (2pt);
%\node[above] at (8, 3) {$p_2$};
\node[above] at (1.5, -2) {\big (a)};
\node[above] at (8.5, -2) {\big (b)};
\end{tikzpicture}
%[Finished in 0.2s]
    \caption{Two DDT-Grid instances. On the left is an instance (a) where agents have rectangular movement areas and have two distinct speeds: %On the left, an instance (a) where agents have rectangular movement areas and two speeds:  
    %\simon{
    %  An example of a DDTGR instance with rectangular movement areas different speeds on the left and a DDTG instance with unit speed on the right: In the DDTGR instance, 
    there are four slow agents with speed $1$ and one fast agent with speed $5$, displayed black and red respectively. Each movement area is represented as a rectangle, indicated by shading. 
   % The movement areas represent a rectangle. 
   The optimal solution, with the respective trips indicated by bold solid arrows, takes a total time of $7.6$ by sequentially using agents $a_1 $, $ a_5$, $a_4$ and $ a_3$ to deliver the package from $s$ to $y$. %\andi{ The respective trips are indicated by the bold solid arrows. }
   On the right is an instance (b) with two agents having unit speed, where the movement areas of the agents are not rectangular.
    %On the right is an instance (b) where the movement areas of the agents are not rectangular and there are two agents with speed $1$.
    %On the right, an instance (b) where both movement areas of the drones are not rectangular.
   % In the DDTG instance on the right, 
    %there are two drones with speed $1$, distinguished in the figure by their start points $p_1$ and $p_2$ and their bounded movement subgraphs.  
    Note that even though the nodes of the subgraph of the second agent have rectangular shape, it is not a rectangular movement area as some edges are missing. The optimal solution takes a total time of $8$ by first using agent $a_1$ and then $a_2$ as indicated by the bold solid arrows.}
    \label{fig:rectangular_example}
\end{figure}
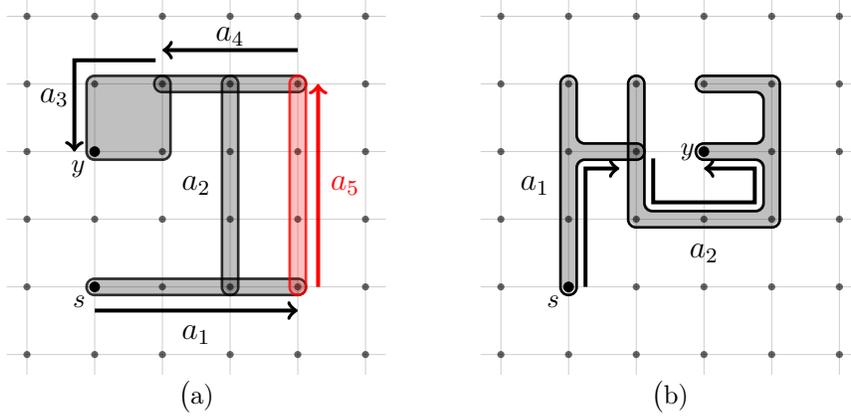

We will study the case of DDT-Grid involving agents that are restricted to rectangular movement areas. 
This restriction to rectangular movement areas is natural, reflecting real-world scenarios where agents' movements are typically constrained to specific, license-determined, convex areas, such as road networks.  
Let us define what constitutes a \emph{rectangular} movement area: For every agent $a$, the vertex set $V_a$ forms a rectangular shape, i.e., for every two vertices $x=(x_1,x_2) \in V_a$ and $y=(y_1,y_2)\in V_a$ with $x_1\leq y_1$ and $x_2\leq y_2$, every vertex $z=(z_1,z_2) \in \mathbb{Z}^2$ with $z_1 \in [x_1,y_1]$ and $z_2\in [x_2,y_2]$ must also belong to $V_a$. Regarding $E_a$, for every pair of vertices in $V_a$ differing by exactly 1 in one coordinate, the connecting edge $e$ must be in $E_a$. Intuitively, this ensures that the subgraph forms a rectangle where every possible vertex and edge within that rectangle is included.  This specific scenario will be referred to as DDT-GridR.  
An example of rectangular as well as arbitrary movement areas is given in Figure \ref{fig:rectangular_example}.  

If all agents have the same speed, we can easily provide a polynomial-time algorithm to solve the above DDT-GridR problem without initial positions by constructing the shortest path and generating the corresponding collaborative schedule. However, when agents operate at two different speeds, the problem becomes significantly more complex and challenging.  
We now state the main result regarding this setting. Notice that now unlike in section \ref{npline} we consider the setting of \emph{no initial positioning}. This means we have the flexibility to choose the initial placement of an agent rather than being constrained to fixed starting positions. We refer to the number of vertices $n$ in the grid graph as the \textit{size} of the grid.

\thmgrid*

% \textcolor{blue}{Note that our results from Theorem \ref{thm:line} directly transfer to the case of initial positions. As the movement areas on the line are considered rectangular, we can think of a grid where every agent is positioned on a single grid line. Adapting the construction yields the desired results in the case of fixed starting positions.}

% \andi{I dont think this is actually that easily transferred. The problem I see is that now we have unit grid (or at least equal distance between vertices) and also in the construction there might occur rational values (with the $\frac{1}{P}$ stuff). Maybe we just write that the general ideas could be adapted for the initial position setting if done carefully.}
% \simon{I agree I think it doesn't carry at all, as we use exponential values for the edges in our proof on the Line, which we can't do here as it requires exponential number of node; but maybe this is interesting to write why in our definitions DDT on a path is not a special case of DDT on a grid, because otherwise it may be confusing for reviewers}

 We prove the theorem by showing a reduction from \textsc{2P1N-3SAT}, a special case of \textsc{3SAT}, where each variable in the input formula appears exactly two times as a positive and one time as negative literal. 
 %, and each clause contains at most three literals.  
It is known, that \textsc{2P1N-3SAT} is NP-hard \cite{ryo:2p1nsat}. 
As in Section~\ref{npline}, we will provide a high-level overview of the construction first and later in this section give a formal proof.

%\andi{As previously in section~\ref{npgrid}, due to space limitations we will give a high level overview of the construction in this section. For the detailed construction as well as the complete proof please refer to Appendix~\ref{sec:appendixddtgrapx}.

Let $\phi$, with $n'$ variables and $m'$ clauses, be an instance of the \textsc{2P1N-3SAT} problem. 
The core idea behind the construction of the reduction instance is as follows: For every (of the three) occurrences of each variable we construct an agent which has to make a choice – help with solving the clauses (clause gadgets) or do not help (variable gadgets). Intuitively, the first refers to the literal being set to true, while the latter refers to the literal being set to false in the corresponding assignment.
%Intuitively, the first refers to the literal appearing as positive, while the latter refers to the literal appearing as negative.  
%the first refers to the literal being set to true whereas the latter refers to the literal being set to false.
The design goal of the variable gadgets is to ensure that every assignment is consistent, i.e.\ if $x_i$ is set to true then $\neg x_i$ is set to false. The idea of the clause gadgets is to make sure that the assignment fulfills every clause of the formula. We obtain that the optimal delivery time is below a certain threshold if and only if the given input $\phi$ of \textsc{2P1N-3SAT} is a ``yes''-instance. Furthermore, the DDT-GridR instance is constructed in a way, such that if $\phi$ is not satisfiable, then the additional delivery time increases drastically.

\begin{figure}[h!]
    \centering
    \input{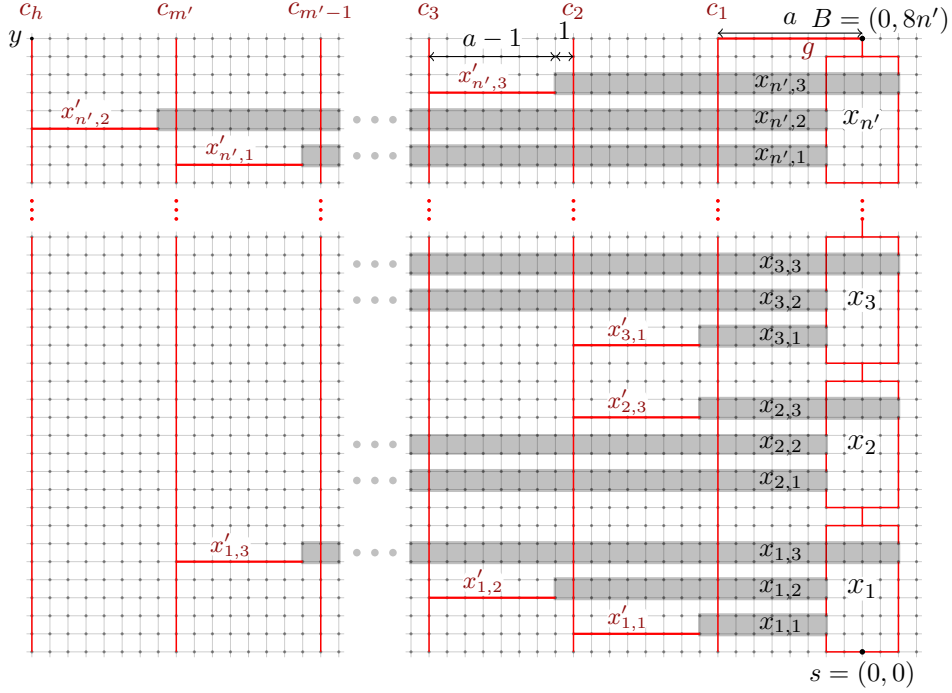}
    \caption{A sketch of the DDT-GridR instance: The package is to be delivered from the bottom-right to the top-left.  The grey bars represent the rectangular movement areas of the literal agents associated with the three occurrences of every variable (two positive $x_{i,1},x_{i,2}$, as well as one negative $x_{i,3}$ ). On the left side we have $m'$ clause gadgets. In this example clause $c_1$ contains literals $x_1$, $\neg x_2$, and $x_3$. Therefore we have the three agents $x'_{1,1},x'_{2,3},x'_{3,1}$, which work as counterparts to $x_{1,1},x_{2,3},x_{3,1}$ delivering the package between two clause gadgets. Agent $c_h$ (on the far left)  serves as an auxiliary agent to assist with the delivery to $y$. On the right side we have $n'$ variable gadgets. Every optimal schedule (given $\phi$ is satisfiable) must traverse all variable gadgets up to point $B$, as delivering horizontally over a distance $a$ with a slow agent would require an excessive amount of time.}
\label{fig:grid_2speed}
\end{figure}

Let us begin to construct the resulting instance for DDT-GridR. For a complete overview see Figure \ref{fig:grid_2speed}. The two different speeds are depicted in red (fast) and grey (slow). Note that every single straight red line represents a fast agent and we want to deliver the package from bottom right ($s$) to top left ($y$). 
For every occurrence of each variable $x_i$ we construct agents $x_{i,1},x_{i,2}$ and $x_{i,3}$ where each of the first two corresponds to one of the positive occurrences and the latter one corresponds to the negative occurrence $\neg x_i$. We will refer to these as the \textit{literal agents}. 
As briefly mentioned before, all literal agents referring to $x_i$ can now contribute to our schedule either in a clause gadget containing $x_i$ or $\neg x_i$, or in the variable gadget corresponding to $x_i$. More precisely, each of $x_{i,1}$ and $x_{i,2}$ have a distinct clause gadget assigned such that the clause contains literal $x_i$ and $x_{i,3}$ has the clause gadget assigned that has the unique occurrence of $\neg x_i$. From Figure \ref{fig:grid_2speed} we observe that the presence of a (slow) literal agent $x_{i,k}$ in clause $c_j$ is required to fill the gap of length 1 (left of the fast agent labeled with $c_j$) to deliver to the fast agent $x'_{i,k}$. We can think of filling this gap as satisfying the particular clause. 

We have discussed the inclusion of both the clause gadget and the variable gadget in the instance construction. First, let us take a closer look at the variable gadgets depicted on the right side of Figure \ref{fig:grid_2speed}. The design goal of the variable gadget is to ensure feasible assignments, that is, either $x_i$ or $\neg x_i$ – but not both – contributes to solving the clauses and is
%does not contribute to solving the clauses or in other words is 
set to true. A close-up on a variable gadget is given by Figure \ref{fig:grid_var_gadget}. Note that this is just one of all $n'$ consecutive variable gadgets and each gadget consists of multiple fast agents (every straight red line). For the traversal of a variable gadget from bottom to top, either both positive literal agents or the negative literal agent have to engage with it. If the traversal starts at $s$, passes through all consecutive variable gadgets, and finishes at
$B$ (see Fig~\ref{fig:grid_2speed}), we can infer that literal agents were placed to help carry the package across the gaps (two gaps on the left side or one gap on the right side of each variable gadget). 
%If starting from $s$ all consecutive variable gadgets are traversed finishing in $B$ (see Fig~\ref{fig:grid_2speed}), we know that there were literal agents placed to help carry the package over the gaps.   

%We can think of these literals being set to false. 

As we move toward the clause gadgets (utilizing auxiliary agent $g$) we run into the following situation: To assist in traversing the gaps induced by the clause gadgets, we can only incorporate the  %previously unused 
literal agents that were not placed in the variable gadget. If we rely on a literal agent that was already placed on the right side of the instance, helping in the variable gadget, to also carry over a gap of a clause gadget,
we know that this agent has traveled at least a certain distance, $a$, at a slow speed.
%we know that this agent traveled distance at least $a$ with slow speed. 
Setting $a$ to be sufficiently large, we can conclude that a schedule utilizing a literal agent in the variable gadget as well as a clause gadget is too slow to beat a threshold time $t$ which assumes that all gaps can be traversed by distinct literal agents. The same holds for using a literal agent in multiple clause gadgets as they also have a horizontal distance of $a$. It is especially important, that the speed of the fast agents depends on $a$ and $a$ itself is relying on $n'$ and $\varepsilon$. 
Essentially, setting $a$ sufficiently large ensures that traveling a distance of $a$ at a slow speed results in a delivery time that is more than  
%large enough ensures that travelling distance $a$ with slow speed makes the delivery time more than  
$n^{1-\varepsilon}$ (with $n$ being the size of the grid) times as large as the delivery time for a satisfiable instance of \textsc{2P1N-3SAT} (using the satisfying assignment in the clause gadgets and the complements in the variable gadgets).

Finally, we get to the essence of the reduction proof argument: 
%Finally we get to the   the essence of the reduction proof argument:  
If there exists a feasible assignment ($\phi$ is a ``yes''-instance), then we can place the agents in a way such that no literal agent has to bridge more than a single gap and the optimal delivery time must be within the threshold. Consequently, the delivery time of an $n^{1-\varepsilon}$-approximation algorithm must be within $n^{1-\varepsilon}$ times the threshold; On the other hand, if we have a sufficiently fast delivery schedule, we can rebuild the assignment by looking at the literal agents placed in the variable gadgets and output (the inverse) as a feasible solution. If the delivery time of the approximation algorithm is less than $n^{1-\varepsilon}$ times the threshold, it is impossible for a slow agent to have traveled a horizontal distance of $a$, as this would take more than $n^{1-\varepsilon}$ times the threshold. This ensures a feasible and satisfying assignment, therefore $\phi$ is a ``yes''-instance. 

%If the delivery time of an $n^{1-\varepsilon}$-approximation algorithm exceeds $n^{1-\varepsilon}$ times the threshold, then the optimal delivery time must also exceed the threshold, therefore $\phi$ is a 'no'-instance. 
%If $\phi$ is a ``yes''-instance, the optimal delivery time must be within the threshold. Consequently, the delivery time of an $n^{1-\varepsilon}$-approximation algorithm must be within $n^{1-\varepsilon}$ times the threshold. Conversely, if the delivery time of the approximation algorithm is less than $n^{1-\varepsilon}$ times the threshold, it is impossible for a slow agent to have traveled a horizontal distance of $a$, as this would take more than $n^{1-\varepsilon}$ times the threshold. This ensures a feasible and satisfying assignment, therefore $\phi$ is a ``yes''-instance.

\begin{figure}[ht]
    \centering
    \begin{tikzpicture}[scale=0.50]
\def\wi{0.08}
\def\op{0.2}
\def\pts{2.0pt}
\foreach \x in {-6,...,3} {
\foreach \y in {-1,...,8} {
\fill[black, opacity=0.6] (\x, \y) circle (\pts);
}
}

\foreach \x in {-6,...,3} {
\draw[black, line width = \wi, opacity=\op] (\x, -1.3) -- (\x, 8.3);
}
\foreach \y in {-1,...,8} {
    \draw[black, line width = \wi, opacity=\op] (-6.3,  \y) -- (3.3, \y);

}
%\draw[red, line width = 2.267716536pt]   (-2, 0) -- (2, 0);
\draw[red, line width=1pt, opacity=0.8, rounded corners=2pt] (-2.2, -0.2) rectangle (2.2, 0.2);
\fill[red, opacity=0.25, rounded corners=5pt] (-2.2, -0.2) rectangle (2.2, 0.2);

%\draw[red, line width = 2.267716536pt]   (-2, 0) -- (-2, 1);

\draw[red, line width=1pt, opacity=0.8, rounded corners=2pt] (-2.2, -0.2) rectangle (-1.8, 1.2);
\fill[red, opacity=0.25, rounded corners=5pt] (-2.2, -0.2) rectangle (-1.8, 1.2);

% \draw[red, line width = 2.267716536pt]   (-2, 2) -- (-2, 3);

\draw[red, line width=1pt, opacity=0.8, rounded corners=2pt] (-2.2, 1.8) rectangle (-1.8, 3.2);
\fill[red, opacity=0.25, rounded corners=5pt] (-2.2, 1.8) rectangle (-1.8, 3.2);

% \draw[red, line width = 2.267716536pt]   (-2, 4) -- (-2, 7);

\draw[red, line width=1pt, opacity=0.8, rounded corners=2pt] (-2.2, 3.8) rectangle (-1.8, 7.2);
\fill[red, opacity=0.25, rounded corners=5pt] (-2.2, 3.8) rectangle (-1.8, 7.2);

% \draw[red, line width = 2.267716536pt]   (2, 0) -- (2, 5);

\draw[red, line width=1pt, opacity=0.8, rounded corners=2pt] (1.8, -0.2) rectangle (2.2, 5.2);
\fill[red, opacity=0.25, rounded corners=5pt] (1.8, -0.2) rectangle (2.2, 5.2);

%\draw[red, line width = 2.267716536pt]   (2, 6) -- (2, 7);

\draw[red, line width=1pt, opacity=0.8, rounded corners=2pt] (1.8, 5.8) rectangle (2.2, 7.2);
\fill[red, opacity=0.25, rounded corners=5pt] (1.8, 5.8) rectangle (2.2, 7.2);

%\draw[red, line width = 2.267716536pt]   (-2, 7) -- (2, 7);
\draw[red, line width=1pt, opacity=0.8, rounded corners=2pt] (-2.2, 6.8) rectangle (2.2, 7.2);
\fill[red, opacity=0.25, rounded corners=5pt] (-2.2, 6.8) rectangle (2.2, 7.2);

%\draw[red, line width = 2.267716536pt]   (0, 7) -- (0, 8);
\draw[red, line width=1pt, opacity=0.8, rounded corners=2pt] (-0.2, 6.8) rectangle (0.2, 8.2);
\fill[red, opacity=0.25, rounded corners=5pt] (-0.2, 6.8) rectangle (0.2, 8.2);

\filldraw (0, 0) circle (3pt);
\node[below] at (0, 0) {start};
\filldraw (0, 8) circle (3pt);
\node[above] at (0, 8) {end};
%\draw[red, line width=1pt, opacity=0.8, rounded corners=2pt] (-0.2, 6.8) rectangle (0.2, 8.2);
\draw[black, line width = 1pt]   (-6.5, 2.1) -- (-1.87, 2.1);
\draw[black, line width = 1pt]   (-6.5, 0.9) -- (-1.87, 0.9);
\draw[black, line width = 1pt]   (-1.87, 2.1) -- (-1.87, 0.9);
\draw[black, line width = 1pt]   (-6.5, 4.1) -- (-1.87, 4.1);
\draw[black, line width = 1pt]   (-6.5, 2.9) -- (-1.87, 2.9);
\draw[black, line width = 1pt]   (-1.87, 4.1) -- (-1.87, 2.9);
\draw[black, line width = 1pt]   (-6.5, 6.1) -- (2.13, 6.1);
\draw[black, line width = 1pt]   (-6.5, 4.9) -- (2.13, 4.9);
\draw[black, line width = 1pt]   (2.13, 6.1) -- (2.13, 4.9);
\draw[black, opacity=0.25, line width = 15.590551185000002pt]   (-6.5, 1.5) -- (-1.87, 1.5);
\draw[black, opacity=0.25, line width = 15.590551185000002pt]   (-6.5, 3.5) -- (-1.87, 3.5);
\draw[black, opacity=0.25, line width = 15.590551185000002pt]   (-6.5, 5.5) -- (2.13, 5.5);
\fill[black, opacity=0.25] (-7, 1.5) circle (5pt);
\fill[black, opacity=0.25] (-7.5, 1.5) circle (5pt);
\fill[black, opacity=0.25] (-8, 1.5) circle (5pt);
\node at (-4, 1.5) { \large $x_{i, 1}$};
\fill[black, opacity=0.25] (-7, 3.5) circle (5pt);
\fill[black, opacity=0.25] (-7.5, 3.5) circle (5pt);
\fill[black, opacity=0.25] (-8, 3.5) circle (5pt);
\node at (-4, 3.5) { \large $x_{i, 2}$};
\fill[black, opacity=0.25] (-7, 5.5) circle (5pt);
\fill[black, opacity=0.25] (-7.5, 5.5) circle (5pt);
\fill[black, opacity=0.25] (-8, 5.5) circle (5pt);
\node at (-4, 5.5) { \large $x_{i, 3}$};
\end{tikzpicture}
%[Finished in 0.2s]
    \caption{Depiction of the variable gadget of $x_i$. It consists of 8 fast agents (red). To deliver the package from ``start'' to ``end'', it must traverse either the left or right side of the gadget. For the left side, both $x_{i, 1}$ and $x_{i, 2}$ must be present to deliver the package across the two gaps; while for the right side, only $x_{i, 3}$ is required to cover a single gap. 
    Note that skipping the gadget and traversing the distance $a$ horizontally using a literal agent is too slow and therefore not feasible for satisfiable inputs.}
    %Note that skipping the gadget and traversing distance $a$ horizontally via some literal agent is too slow and will not be feasible for satisfiable inputs.}}
    \label{fig:grid_var_gadget}
\end{figure}
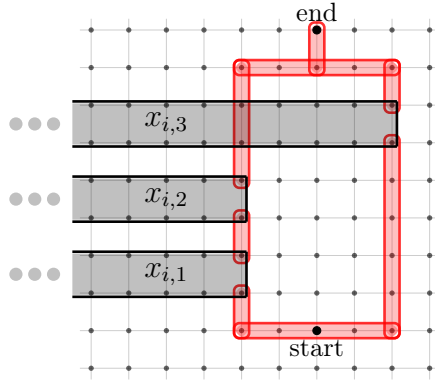

% \begin{figure}[h!]
%     \centering
%     \input{grid_var_gadget}
%     \caption{\andi{Depiction of the variable gadget of $x_i$: The package has to traverse either the left or the right side of the gadget. For the left side, both $x_{i, 1}$ and $x_{i, 2}$ must present to deliver the package over the two gaps while for the right side, only $x_{i, 3}$ must be present to cover one gap. Note that skipping the gadget and traversing distance $a$ horizontally via some literal agent is too slow and will not be feasible for satisfiable inputs.}}
%     \label{fig:grid_var_gadget}
% \end{figure}
%old end of the caption
% In this example, clause $c_1$ includes the literals $x_1$, $\neg x_2$, and $x_3$; clause $c_2$ contains $x_1$ and $\neg x_n$; clause $c_{m'-1}$ consists of $\neg x_1$ and $x_{n'}$; and clause $c_m'$ contains $x_{n'}$ On the right side all the variable gadgets are stacked vertically. On the left side all the clause agents can move on parallel lines with a vertical gap of $a$.
 
 We now state the DDT-GridR instance formally and provide a proof of Theorem \ref{thm:grid_2speed}.

\paragraph*{Proof of Theorem 2.}
\label{sec:appendixddtgrapx}

%\thmgrid*

%Let $t$ be the size of the grid, i.e. the number of vertices. 
Let $\phi$ be the input formula of \textsc{2P1N-SAT} with $n'$ variables and $m'$ clauses. Observe that $m' \leq 3n'$.  %. and $n=a\cdot 8n'(m'+1)$. 
% \begin{theorem}
%     DDTGR is $n^{1-\varepsilon}$-APX-hard.
% \end{theorem}
We will now formally state the DDT-GridR instance $I$, describing each agent by its speed and movement area. The speed of the fast agents is set to $a-1$ and the slow agent speed to $1$. We set the starting point of the package to be $s = (0,0)$ and the destination to be $y = (-(m'+1)\cdot a,8n')$ where $n'$ is the number of variables and $m'$ is the number of clauses in the \textsc{2P1N-3SAT} instance $\phi$ and $a$ is a large integer that is set in the proof, depending on $n'$ and $\varepsilon$. Note that $a$ is also the horizontal spacing between clause gadgets as well as the transition from variable to clause gadgets.

Notice that each variable gadget $i$ consist of $8$ fast agents see Fig \ref{fig:grid_var_gadget}) with the following rectangular movement areas: $\forall i \in \{1, \dots, n'\}: $ 
\begin{itemize}
    \item[$\bullet$] $(a-1, [-2, 2] \times [8\cdot(i-1), 8\cdot(i-1)])$
    \item[$\bullet$] $ (a-1, [-2, -2] \times [8\cdot(i-1), 8\cdot(i-1)+1])$
    \item[$\bullet$] $(a-1, [-2,-2] \times [ 8\cdot(i-1)+2, 8\cdot(i-1)+3])$ 
    \item[$\bullet$] $(a-1, [-2, -2] \times [8\cdot(i-1)+4, 8\cdot(i-1)+7])$ 
    \item[$\bullet$] $(a-1, [2, 2] \times [8\cdot(i-1), 8\cdot(i-1)+5])$
    \item[$\bullet$] $(a-1, [2, 2] \times [8\cdot(i-1)+6, 8\cdot(i-1)+7])$
    \item[$\bullet$] $(a-1, [-2, 2] \times [8\cdot(i-1)+7, 8\cdot(i-1)+7])$
    \item[$\bullet$] $(a-1, [0, 0] \times [8\cdot(i-1)+7, 8\cdot(i-1)+8])$
\end{itemize}
Recall that since we are in the setting of no initial positioning, the agents are triples as the initial position $p_a$ is not specified in the instance.

As defined, $x_{i, 1}$ and $x_{i, 2}$ represent the positive literals of $x_i$, and $x_{i, 3}$ the negative literal. For each literal $x_{i, k}$, we denote $c_{j_{i, k}}$ as the clause that includes the literal $x_{i, k}$. The remaining agents are identified by their respective names.

\begin{itemize}
\item 
Each clause gadget $j$ consist of a fast agents with the following rectangular movement area:  $\forall j \in \{1, \dots, m'\}: $ 

\begin{itemize}
    \item[$\bullet$] agent $c_j$:  $(a-1, [-j\cdot a, -j\cdot a] \times [0, 8n'])$
\end{itemize}
\item Additionally there is an auxiliary agent $c_h$: $(a-1, [-(m'+1)\cdot a, -(m'+1)\cdot a] \times [0, 8n'])$

\item  %\andi{I would rather call this: The literal agents $x_{i,1},x_{i, 2}, x_{i, 3}$ and their respective counterparts $x'_{i,1},x'_{i, 2}, x'_{i, 3}$ ... depicted in Figure \ref{fig:grid_2speed} as grey bars}
For each variable $x_i$ we introduce the $3$ slow literal agents $x_{i, 1}, x_{i, 2}, x_{i, 3}$ and their respective counterparts $x'_{i,1},x'_{i, 2}, x'_{i, 3}$, depicted as gray bars and red lines, respectively in Figure \ref{fig:grid_2speed}. Formally they are defined using their speeds and movement areas:  
$\forall i \in \{1, \dots, n'\}: $ 
\begin{itemize}
    \item[$\bullet$] agent $ x_{i, 1}$:  $(1, [-j_{i, 1}\cdot a-1, -2] \times [8(i-1) + 1, 8(i-1) + 2])$
   \item[$\bullet$] agent $ x_{i, 2} $:  $ (1, [-j_{i, 2}\cdot a-1, -2]  \times [8(i-1) + 3, 8(i-1) + 4] )$
\item[$\bullet$] agent $ x_{i, 3} $:  $  (1, [-j_{i, 3}\cdot a-1, 2] \times [8(i-1) + 5, 8(i-1) + 6] )$
   \item[$\bullet$] agent $ x_{i, 1}'  $:  $ (a-1, [-(j_{i, 1}+1)\cdot a, -j_{i, 1}\cdot a-1]  \times [8(i-1) + 1, 8(i-1) + 1])$
  \item[$\bullet$] agent $ x_{i, 2}' $:  $ (a-1, [-(j_{i, 2}+1)\cdot a, -j_{i, 2}\cdot a-1] \times [8(i-1) + 3, 8(i-1) + 3]) $
 \item[$\bullet$] agent $ x_{i, 3}' $:  $(a-1, [-(j_{i, 3}+1)\cdot a, -j_{i, 3}\cdot a-1]  \times [8(i-1) + 5, 8(i-1) + 5]) $
\end{itemize}

\item In addition, we have a fast auxiliary agent $g$ dedicated to deliver the package from last variable gadget to the first clause gadget defined as: 
\begin{itemize}
    \item[$\bullet$] $ g:  (a-1, [-a, 0] \times [8n', 8n'])$,
\end{itemize}
\end{itemize}
Therefore, we have for the size of the grid $n$ that $n\leq 8n'(a\cdot (m'+1)+3)$.
We set the threshold value $t$ to be $n'^3$, i.e. in order to show $n^{1-\varepsilon}$-APX-hardness, we need to show that for any $n^{1-\varepsilon}$-approximation algorithm $A$ it holds, that $\phi$ is satisfiable if and only if the delivery time $t_A(I) \leq n^{1-\varepsilon}(n')^3$, where we define $a:=\left\lceil (n')^{\frac{6}{\varepsilon}} \right\rceil+1$.
%$ > \frac{(8(n')^4(m'+1))^{1/\varepsilon}}{8(m'+1)n'}$.\\

\begin{enumerate}
\item[$\implies$:]
$\phi$ is satisfiable, so let $\mathbf{x}$ be a satisfying assignment.\\
Given a variable $x_i$, we refer to the clauses that contain $x_{i, 1}$, $x_{i, 2}$ and $x_{i, 3}$ as $c_{j_1}$, $c_{j_2}$ and  $c_{j_3}$, respectively.

If $\mathbf{x}_i = 0$, we set the positions of both slow agents corresponding to the positive literals $x_{i,1}$ and $x_{i,2}$ to be at the variable gadget of $x_i$, i.e., at $(-2, 8(i-1) + 1)$ and $(-2, 8(i-1) + 3)$, respectively. The position of the slow agent corresponding to the negative literal $x_{i,3}$ is set to be on the vertical line of $c_{j_3}$, i.e., at $(-j_3\cdot a, 8(i-1) + 5)$, which is one unit before the start of $x_{i,3}'$.

If $\mathbf{x}_i = 1$, the position of $x_{i,3}$ is set to be at the variable gadget of $x_i$, i.e., at $(2, 8(i-1) + 5)$. The position of $x_{i,1}$ is set to be on the vertical line of $c_{j_1}$, at $(-j_1\cdot a, 8(i-1) + 1)$, one unit to the right of $x_{i,1}'$ and $x_{i,2}$ is set to be on the line of $c_{j_2}$, so at $(-j_2\cdot a, 8(i-1) + 3)$, one unit to the right of $x_{i,2}'$. 

Thus, for each variable $x_i$, either $x_{i,1}$ and $x_{i,2}$ are positioned at the variable gadget, or $x_{i,3}$ is positioned there. This configuration allows each gadget to be traversed in at most $12$ time units and point $B$ can be reached in time $12n'$. Next, $c_1$ is reached after an additional $\frac{a}{a-1}\leq 2$ time units with auxiliary agent $g$.

Let $x_{i,k}$ be a true literal of $c_1$ in the fulfilling assignment $\mathbf{x}$. We travel with agent $c_1$ to $(-a, 8(i-1) + 2k - 1)$ to the meeting point with $x_{i,k}$ (in at most $8n'$ time steps), and then one unit to the left with $x_{i,k}$ (in one time step) to bridge the gap and reach the movement area of $x_{i,k}'$. The agent $x_{i,k}$ is present at $c_1$, since either $k \in \{1, 2\}$ with $\mathbf{x}_i = 1$, setting $x_{i,k}$ at $c_1$, or $k = 3$ with $\mathbf{x}_i = 0$, also setting $x_{i,k}$ at $c_1$. Then, $x_{i,k}$ is used to reach $c_2$ in $\frac{a-1}{a-1}=1$ time step. This process is repeated for all other clause drones $c_3, \dots, c_m'$, each taking time $2+8n'$. Finally, $y$ is reached by $c_h$, which can be done in $8n'$ time. 
Thus, travelling from
\begin{itemize}
    \item from $s$ to $B$ takes time at most $12n'$
    \item from $B$ to $c_1$ takes time $\leq 2$
    \item from $c_1$ to $c_h$ takes time at most $m' (2+8n')$
    \item from $c_{m'}$ to $y$ takes time at most $8n'$
\end{itemize}
Therefore the total time of the optimal schedule is at most $12n'+2+m'\cdot(2+8n')+8n'$, which is $\leq n'^3$ for sufficiently large instances, therefore $t_A(I) \leq n^{1-\varepsilon}n'^3$.

\item[$\impliedby$:] 
Let $S$ be the schedule returned by algorithm $A$ with duration $t_A(I) \leq n^{1-\varepsilon}\cdot n'^3$. We observe that for sufficiently large instances the grid has size $n=8n'(a\cdot(m'+1)+3) \leq 16n'm'\cdot a \leq n'^3\cdot a$, therefore we can infer
\begin{equation}\label{eq}
    a < n < n'^3\cdot a
\end{equation}
We now show that in $S$, no literal agent can deliver the package over a horizontal distance of $a-1$ or more. 
Suppose there is a literal agent with speed 1 that covers a distance of $a-1$ or more. Then in total, 
$$
\begin{alignedat}{3}
    t_A(I)& > a  && > n'^{\frac{6}{\varepsilon}} \\
    &\iff  a^{\varepsilon} &&> n'^6 \\
    &\stackrel{\mathclap{(\ref{eq})}}{\implies} n^{\varepsilon} &&> n'^6 \\
    &\iff  \frac{1}{n'^6} &&> n^{-\varepsilon} \\
    &\iff  \frac{n}{n'^6} &&> n^{1-\varepsilon} \\
    &\stackrel{\mathclap{(\ref{eq})}}{\implies}  \frac{n'^3 \cdot a}{n'^6} &&> n^{1-\varepsilon} \\
    &\iff  a &&> n^{1-\varepsilon}\cdot n'^3 
\end{alignedat}
$$
therefore $t_A(I) > n^{1-\varepsilon}\cdot n'^3$, which contradicts the bound.
%\andi{all t's should be n's right`? please check yes, thanks}
%$w_A(I) > a > (8n^4m(8mn)^{-\varepsilon})^{1/\varepsilon}$ \\
%$ / OPT > t^{1-\varepsilon}$

% The difference between the $x$-coordinates of $A$ and $y$ is $n^3(m + 1)$. If a slow drone was to cover $n^3$ of this distance, the total delivery would require at least $n^5+n^3 + \frac{1}{2} \cdot n^3\cdot m = n^5+(m+2)(\frac{1}{2}\cdot n^3)$ time, which contradicts the bound, because $$n^5+(m+2)(\frac{1}{2} \cdot n^3) \leq n^5+\frac{1}{2} \cdot n^3(m + 1) + 13n(m + 2)$$
% $$\iff \frac{1}{2} \cdot n^3 \leq 13n(m + 2) \leq 28n^2$$
% which is false for sufficiently large $n$. Note that because $\phi$ is a 2P1N-Instance, $m \leq 2n$ holds.\\
Thus, a single slow literal agent cannot deliver the package directly from a variable gadget to $c_1$ or between clauses without violating our time constraint. Similarly, an agent cannot appear at two points separated by more than $a-1$ units.
Therefore, the package must pass through $B$ and utilize agent $g$, meaning for each variable gadget of $x_i$, either $x_{i,1}$ and $x_{i,2}$ are present for traversing the left path, or $x_{i,3}$ is present for the right path. In the former case, we set $x_i = 0$ (false), and in the latter, $x_i = 1$ (true). 
This construction yields a feasible assignment. It remains to show that this assignment is also satisfying.

To deliver the package to $y$, it must first reach $c_1$, then $c_2$, $c_3$ and subsequently $c_{m'}$ and $c_h$. Since taking the package from $c_{j}$ to $c_{j+1}$ requires a fast agent, we need an agent $x_{i, k}'$, such that $x_{i, k}$ is a literal in clause $c_j$. Then $x_{i, k}$ is needed to bridge the gap from $c_{j}$ to $x_{i,k}'$. 
If $k \in \{1, 2\}$ (indicating that $x_{i, k}$ is a positive literal), then $x_{i,3}$ must be present at the variable gadget. Thus, in our assignment, $x_i = 1$, so $c_{j}$ is satisfied. Conversely, if $k = 3$ (indicating $x_{i, k}$ is a negative literal), then $x_{i,1}$ and $x_{i,2}$ must be at the variable gadget, so $x_i = 0$, satisfying $c_j$. 

This observation translates to all clause gadgets as we derived that they are traversed without using a single literal agent for at least $a-1$ steps. Thus, all clauses are satisfied, making the assignment satisfying. Therefore, $\phi$ is satisfiable.

\end{enumerate}
This rounds up the proof of Theorem \ref{thm:grid_2speed}.\qed

Interestingly, the presented construction also has implications for the setting of initial positioning. %these results can be applied to the setting of initial positioning. 
The idea is to not start immediately at $s$ with the variable gadgets but start with a sufficiently large delay. If the delay is large enough, the literal agents can position themselves regardless of their initial starting point. However, note that this only establishes that the problem is NP-hard; it does not prove that the problem cannot be approximated within $O(n^{1-\varepsilon})$. We lose the inapproximability result due to the inclusion of a significant delay as part of every solution. Additionally, it is worth noting that the results from Theorem \ref{thm:line} do not directly apply to our setting on the unit grid with initial positions. The issue arises because, on a unit grid, for potentially exponentially large input values (with respect to the size of the encoding of the input), we would require $\Omega(P)$ vertices, where 
$P$ is the sum of all elements in the Partition input.

\subsection*{A simple $O(n)$-approximation for DDT-GridR on a unit grid}
We now demonstrate that, in the setting of %unweighted graphs—a generalization of 
unit grid graphs with different speeds a simple greedy algorithm achieves an \( n \)-approximation. Notably, this result is near-optimal due to Theorem \ref{thm:grid_2speed}, as no $O(n^{1-\varepsilon})$-approximation algorithm for any $\varepsilon > 0$ exists, unless P$=$NP. 
%
% \begin{theorem}
%     There exists an \( O(n) \)-approximation algorithm for DDT on unweighted graphs.
% \end{theorem}
%

The proposed algorithm begins by sorting the agents in non-increasing order of speed and iteratively attempts to find an (s-y)-path in the graph induced by the movement areas of only the fastest available agents. 
If a path cannot be found using the fastest agent, the algorithm progressively incorporates slower agents until a valid path is identified. Once a valid path is determined, a corresponding schedule can be constructed using the respective agents. 
%The proposed algorithm operates by sorting the agents by their speeds in non-increasing order and iteratively attempting to construct a \andi{(s-y)-path rather or, as schedule is constructed later}schedule using only the fastest available agents.  If a feasible schedule cannot be found with the fastest agent, the algorithm progressively incorporates slower agents until a valid schedule is identified.
%\simon{The proposed algorithm operates by sorting the agents by their speeds in non-increasing order and iteratively attempting to find a ($s$-$y$)-path in the graph induced by the movement areas of only the fastest agents. If a path cannot be found using the fastest agent, the algorithm progressively incorporates slower agents until a valid path is identified. Using this path a valid schedule can be constructed using the respective agents. }
The algorithm is presented as follows:

\begin{algorithm}[H]
\caption{Greedy $O(n)$-approximation}
\label{alg:greedy}
\begin{algorithmic}[1]
%\INPUT Drones \( a_1, \dots, a_k \) with speeds \( v_1, \dots, v_k \), sorted in non-decreasing order of speed.
%\OUTPUT A feasible schedule \( S' \) or confirmation of infeasibility.
\STATE Let the agents $a_1, \dots, a_k$ be sorted in non-increasing order of their speeds. 
\FOR{\( i = 1 \) to \( k \)}
    \IF{there exists a path \( P \) from \( s \) to \( y \) in graph \( (V_1 \cup \dots \cup V_i, E_1 \cup \dots \cup E_i) \)}
        \STATE Construct a feasible schedule \( S \) by assigning an agent \( a \in \{a_1, \dots, a_i\} \) to each edge \( e \in P \) such that \( e \in E_a \).
        \STATE Transform \( S \) into schedule \( S^* \) such that each agent is used at most once.   %(as shown in Section \ref{preliminaries}) %(using Lemma ?? from \cite{erlebach}).
      %  \STATE Transform \( S' \) into schedule \( S^* \) such that each vertex is visited at most once in $S^*$   %(using Lemma ?? from \cite{erlebach}).
        \RETURN \( S^* \)
    \ENDIF
\ENDFOR
\RETURN No feasible schedule exists.
\end{algorithmic}
\end{algorithm}
\noindent

During the construction of schedule $S$ in step 4, it is possible for a single agent to be used for multiple disjoint segments of the trip. However, we can convert $S$ into a schedule $S^*$ that uses each agent at most once  without increasing the total delivery time. 
Specifically, for any agent $a$, consider the first and last edge assigned to $a$ in $S$, say $\{u, v\}$ and $\{u', v'\}$, respectively. If there exist edges between $\{u, v\}$ and $\{u', v'\}$ in the schedule $S$ that are assigned to other agents, the schedule can be adjusted to deliver the package from $v$ to $v'$, removing the rest of the schedule in between $\{u, v\}$ and $\{u', v'\}$.  This ensures that each agent is used at most once, and consequently, the package never has to wait for any agent to arrive. Additionally, the agent will deliver the package from $u$ to $v'$ with the minimum total length, as each agent is constrained to a rectangular movement area within the grid. Furthermore, the length of the found ($s$-$y$)-path traversed by $S^*$ is upper-bounded by $n$.

The algorithm runs for at most \( k \) iterations. Each iteration requires \( O(n) \)-time to search for the path $P$ using a depth-first search (DFS). To construct $S^*$, we perform a DFS for each agent $a$ to find a direct path in $(V_a,E_a)$ from the first edge to the last edge assigned to $a$, resulting in $O(n\cdot k)$.   As Step 4-5 are done only once, the total runtime complexity is $O(n \cdot k + k \log k)$, which also includes the time required for sorting the agents.

Suppose the algorithm terminates after $i'$ iterations.
%Let \( i' \) denote the value of \( i \) at which the algorithm terminates and returns a solution.
Note that no feasible schedule exists using only the agents \( a_1, \dots, a_{i'-1} \); otherwise, the algorithm would have terminated in an earlier iteration. Consequently, in any feasible schedule, at least one edge (of unit length) must be traversed by an agent with speed at most \( v_{i'} \). Thus, \( OPT \geq 1 / v_{i'} \), with $OPT$ denoting the delivery time of an optimal schedule. 

Since the initial positions of every agent can be chosen and every agent is used at most once, the package never has to wait for any drone to arrive. Therefore, the delivery time of schedule $S^*$ is solely determined by the time spent delivering the package along the corresponding route. The total distance traveled by the package is at most \( n-1 \), and each agent in \( S^* \) has a speed of at least \( v_{i'} \). Therefore, for the total delivery time of schedule $S^*$ we conclude
\[
t(S^*) \leq \frac{n}{v_{i'}}.
\]
Combining this with \( OPT \geq \frac{1}{v_{i'}} \), the approximation guarantee of the algorithm is
\[
\frac{t(S^*)}{OPT} \leq \frac{\frac{n}{v_{i'}}}{\frac{1}{v_{i'}}} = n.
\]

\section{Conclusions}
In this paper we presented new results regarding the complexity of the Drone Delivery Problem (DDT). For DDT on a line, we prove that the problem remains NP-hard when agents are restricted to two different speeds.  For DDT on a grid graph,
We demonstrate that the problem is NP-hard to approximate within 
$O(n^{1-\varepsilon})$ without initial positions for agents with rectangular movement areas and two speeds, while also presenting an $O(n)$-approximation algorithm. 
%we show that the problem is even NP-hard to approximate within $O(n^{1-\varepsilon})$ without initial positions for agents with rectangular movement areas and with two speeds as well as presenting an $O(n)$ approximation algorithm.} 
Prior work by Erlebach et al.~\cite{erlebach:drones} demonstrates that the DDT problem in a general graph is APX-hard. Our results extend this understanding by addressing the case of two speeds on a line and on a unit grid graph, both with and without initial positions.  
%without initial positions for rectangular movement areas and two speeds as well as for arbitrary movement areas and a single speed the problem is NP-hard and even inapproximable. 
%In the initial position setting the NP-hardness on the grid for both cases can be transferred. 
Our findings may inspire further exploration into the complexity of collaborative routing problems and potentially completely different fields of research, such as  covering problems.

% Our hardness proof for the grid could potentially demonstrate an inapproximability ratio, aligning with the general graph setting that also possesses a tight approximation algorithm. 
One limitation of our study is the absence of research into polynomial-time constant approximation algorithms for DDT on a path graph. This gap opens a promising avenue for future research, particularly in both scenarios, with and without initial positions, where the NP-hardness of the latter remains unresolved. It is also of interest to explore subinterval covering problems. %\simon{What are subinterval covering problems exactly? isnt this the same as DDT on a line without initial positions}

%We leave open any polynomial time constant approximation algorithms for DDT on a line. Regarding this, it might be interesting to consider the setting with and without initial positions. For the setting without initial positions it not even clear whether or not the problem is NP-hard.

\bibliographystyle{splncs04}
\bibliography{refs}

\end{document}